\newtheorem{thm}{Theorem}[section]
\newtheorem{defini}[thm]{Definition}
\newtheorem{lem}[thm]{Lemma}
\newtheorem{cor}[thm]{Corollary}
\newtheorem{pro}[thm]{Proposition}
\newtheorem{exa}[thm]{Example} 
\title{{\Large {\bf 
A remark on zeta functions of finite graphs via quantum walks
}
}}
\author{ 
{\small 
Yusuke HIGUCHI,$^{1}$ 
\quad 
Norio KONNO,$^{2}$ 
\quad
Iwao SATO,$^{3}$ 
\quad
Etsuo SEGAWA$^{4}$ 
}\\ 
{\scriptsize $^{1}$ 
Mathematics Laboratories, College of Arts and Sciences, Showa University
}\\
{\scriptsize 
4562 Kamiyoshida, Fujiyoshida, Yamanashi 403-0005, Japan
} \\
{\scriptsize $^3$ 
Department of Applied Mathematics, Faculty of Engineering, Yokohama National University 
}\\
{\scriptsize 
Hodogaya, Yokohama 240-8501, Japan
} \\
{\scriptsize $^3$ 
Oyama National College of Technology
}\\
{\scriptsize 
Oyama, Tochigi 323-0806, Japan
} \\
{\scriptsize $^4$ 
Graduate School of Information Sciences, Tohoku University
}\\
{\scriptsize
Sendai 980-8579, Japan@
}\\
} 
\date{\empty }
\begin{document}
\maketitle

\par\noindent
\begin{small}
\par\noindent
{\bf Abstract}. 
From the viewpoint of quantum walks, 
the Ihara zeta function of a finite graph can be said to be 
closely related to its evolution matrix. 
In this note we introduce another kind of zeta function of a graph, 
which is closely related to, as to say, the square of the evolution 
matrix of a quantum walk. 
Then we give to such a function two types of determinant expressions 
and derive from it some geometric properties of a finite graph. 
As an application, we illustrate the distribution of poles of this function 
comparing with those of the usual Ihara zeta function.

\footnote[0]{
{\it Key words.} Quantum walk, Grover matrix, Positive support, Ihara zeta function
}

\end{small}

\setcounter{equation}{0}

\section{Introduction}
As is the classical random walk on a graph  
has important roles in various fields, 
the quantum walk, say QW, is expected to play such a role in the quantum field. 
In fact, we can find many studies on QW cover a wide research area 
from the basic theoretical mathematics to the application oriented fields. 
It has been shown, for example, that 
analyzing some spatial structure 
\cite{AmbainisEtAl2001,segawa4,segawa3} 
as an extension of quantum speed-up algorithm \cite{Grover1,Grover2}, 
and application 
to a universal computation in quantum mechanical computers \cite{segawa5}, 
expressing the energy transfer on the chromatographic network 
in the photosynthetic system \cite{segawa6} and so on are 
strongly influenced by its virtue. 
Besides, approximations of QWs describing physical processes are derived from 
 Dirac and Schr\"odinger equations \cite{segawa8,segawa7}. 
A QW model has been also shown to be useful for describing 
the fundamental dynamics of the quantum multi-level system which is 
irradiated by lasers \cite{segawa17}. 
The laser control technology of quantum system is 
expected to be applied for the industry as a highly-selective method 
for material separation, especially, isotope-selective excitation 
of diatomic molecules such as Cs133 and Cs135. 
Recently by the above theoretical evidences for the usefulness and 
activeness of the studies of QWs, experimental implementations of QWs are 
quite aggressively investigated. 
See \cite{segawa18,segawa19,segawa20}, for example.   

Now we shall focus on mathematical research on QW. 
The starter creating studies of QW in earnest are considered as 
the QW on one dimensional lattice introduced by \cite{AmbainisEtAl2001}:  
one of the most striking properties  is the spreading property of the walker. 
Its standard deviation of the position grows linearly in time, 
quadratically faster than the classical random walk. 
The behaviour is clarified by a limit theorem characterized by 
a new density function named ``$f_{K}$ function'' 
\cite{Konno2002,Konno2005}. 
The review and book on QWs are 
J.~Kempe \cite{Kempe2003} and N.~Konno \cite{Konno2008b}. 
See also \cite{Ambainis2003, manoucheheri2014,VA}. 
For a general graph, it is usual to consider some special but typical type 
of QWs:  
the {\em Grover walk} originated in \cite{Grover1,Grover2} or 
the {\em Szegedy walk} in \cite{segawa3}. Roughly speaking, 
the former is induced by the simple random walk and the latter by 
more general random walk on a graph. 
In this context, the relationship between spectra of QW and 
that of the classical random walk is investigated in 
\cite{EmmsETAL2006,HKSS,KS2012,Segawa}. 
From now on 
we call the evolution matrices of the Grover walk and the Szegedy walk 
just the {\em Grover matrix} ${\bf U}$   
and the {\em Szegedy matrix} ${\bf U}_{sz}$, respectively. 

Recently there are some trials to apply QW to graph isomorphism problems 
\cite{EmmsETAL2006,EmmsETAL2009,GambelETAL,ShiauETAL}. 
For graph isomorphism problems, while spectra of the Grover matrix 
is considered to have 
almost same power as that of conventional operator,  
it is suggested that 
the method of $({\bf U}^{3})^{+}$, which is the {\em positive support} of 
the cube of the Grover matrix ${\bf U}^{3}$, outperforms 
the graph spectra methods, in particular,  
in distinguishing strongly regular graphs in \cite{EmmsETAL2006}. 
What we emphasize is that not only the Grover matrix ${\bf U}$ itself 
but the {\em positive support} $({\bf U}^{n})^{+}$ of its $n$-th power 
is an important operator of a graph. 
See also \cite{GG2010, HKSS}. Meanwhile, 
in \cite{KS2012,RenETAL} the relationship between the Ihara zeta function 
and the positive support $({\bf U})^{+}$ of the Grover matrix of a graph 
is discussed: a matrix $({\bf U})^{+}$ derived from QW is essentially the same 
as the edge-matrix in \cite{Bass1992,Hashimoto1989} 
and the Perron-Frobenius operator 
in \cite{KS2000}, both of which are important operators in characterizing 
that function. The Ihara zeta functions of graphs started for regular 
graphs by Y.~Ihara \cite{Ihara1966} and is generalized to a general graph. 
Already various success related to graph spectra is 
obtained in \cite{Bass1992,Hashimoto1989,Ihara1966, KS2000, Sunada1986}. 

This note is a sequel work to our previous work \cite{HKSS}, 
therein we established a general relation 
between QW and the classical random walk; as its application, 
we recover the results in \cite{EmmsETAL2006,GG2010,KS2012,Segawa} 
of spectral relation between three matrices 
${\bf U}$, $({\bf U})^{+}$, $({\bf U}^{2})^{+}$ 
from QW and 
the adjacency matrix ${\bf A}_{G}$. 
Our main purpose in this note is 
to characterize another kind of zeta function with respect to 
$({\bf U}^{2})^{+}$, which is 
the positive support of the squared Grover matrix. 

To state our result precisely, let us give our setting.
A graph $G$ is a pair of two sets $(V(G), E(G))$, where $V(G)$ stands for 
the set of its vertices and $E(G)$ the set of its unoriented edges. 
Assigning two orientations to each unoriented edge in $E(G)$, 
we introduce the set of all oriented edges and denote it by $D(G)$. 
For an oriented edge $e\in D(G)$, the origin of $e$, the terminus of $e$ 
and the inverse edge of $e$ are denoted by $o(e)$, $t(e)$ and $e^{-1}$,
respectively. Furthermore the {\em degree} of $x\in V(G)$, $\deg_{G} x$, 
is defined as the number of oriented edges $e$ such that $o(e)=x$; 
we denote $\min_{x\in V(G)}\deg_{G} x$ and $\max_{x\in V(G)}\deg_{G} x$ 
by $\delta (G)$ and $\Delta (G)$, respectively. 
A graph $G$ here is basically assumed to be a connected finite graph 
with $n$ vertices, $m$ unoriented edges and $\delta (G)\geq 3$; 
it may have multiple edges or self-loops. 
For a natural number $k$, if $\deg_{G}v=k$ for each vertex $v\in V(G)$, 
then a graph $G$ is called {\em $k$-regular}. 

Let us introduce the {\em Grover matrix} ${\bf U}_{G}={\bf U}$, 
which is a special QW related to the simple random walk on $G$, 
and the {\em positive support} ${\bf F}^{+}$ for a real matrix ${\bf F}$.  

\begin{defini}
The {\em Grover matrix} ${\bf U}=( U_{e,f} )_{e,f \in D(G)} $ 
of $G$ is a $2m\times 2m$ matrix defined by 
\[
U_{e,f} =\left\{
\begin{array}{ll}
2/\deg_{G} o(e),  & \mbox{if $t(f)=o(e)$ and $f \neq e^{-1} $, } \\
2/\deg_{G} o(e) -1, & \mbox{if $f= e^{-1} $, } \\
0, & \mbox{otherwise},
\end{array}
\right. 
\]
and the {\em positive support} ${\bf F}^+ =( F^+_{i,j} )$ of 
a real square matrix ${\bf F} =( F_{i,j} )$ is defined by 
\[
F^+_{i,j} =\left\{
\begin{array}{ll}
1, & \mbox{if $F_{i,j} >0$, } \\
0, & \mbox{otherwise}. 
\end{array}
\right.
\] 
\end{defini}

Properties of the Grover matrix can be seen in \cite{Grover1,Grover2}; 
see also \cite{EmmsETAL2006,GG2010,HKSS,KS2012,Segawa}. 
The {Szegedy matrix} related to a 
general random walk on $G$ is omitted since we do not use here; 
its definition and properties can be seen in 
\cite{HKSS, Segawa, segawa3}, for instance. 
The spectra of 
the positive support ${\bf U}^{+} $ of the Grover matrix    
and  $({\bf U}^{2})^{+} $ of its square 
on a regular graph $G$ are expressed in \cite{EmmsETAL2006}, 
also in \cite{GG2010, HKSS}, 
by means of those of the {\em adjacency matrix} ${\bf A}_{G}$ of $G$, 
which is an important matrix also in this note and defined as follows:  
the {\em adjacency matrix} ${\bf A}_{G} = (a_{x,y})_{x,y\in V(G)}$ is an  
$n\times n$-matrix such that 
$a_{x,y}$ coincides with the number of oriented edges such that  
$o(e)=x$ and $t(e)=y$. 

Now let us consider the following function ${\bf Z}_{G} (u)$ of a graph $G$  
for $u \in \mathbb{C}$ with $|u|$ sufficiently small: 
\begin{equation}\label{generalzeta}
{\bf Z}_{G} (u)= \prod_{[C]} (1- u^{ \mid C \mid } )^{-1}. 
\end{equation}

In (\ref{generalzeta}), if 
$[C]$ runs over all {\em equivalence classes of prime and reduced cycles} 
of $G$, then ${\bf Z}_{G} (u)$ becomes the well-known {Ihara zeta function}. 
Details will be seen Section~2, therein we give a brief summary on 
the Ihara zeta function. Roughly speaking, we will find 
two matrices $({\bf U})^{+}$ and ${\bf A}_{G}$ control this function. 
On the other hand, if 
$[C]$ runs over all {\em equivalence class of prime 2-step-cycles} 
of $G$, then ${\bf Z}_{G} (u)$ becomes a {\em modified zeta function}, 
say $\tilde{\bf Z}_{G} (u)$, which is the main object in this note. 
Precise definitions around this can be seen in Section~3. 
Roughly speaking, we will find two matrices $({\bf U}^{2})^{+}$ 
and ${\bf A}_{G}$ control this function. 
Our main theorem in this note is as follows: 

\begin{thm}[Main Theorem]
Let $G$ be a simple connected graph with $n$ vertices, $m$ unoriented edges 
and $\delta (G)\geq 3$. Then 
\begin{eqnarray}
\tilde{{\bf Z}}_{G}(u) &=& 1/\det ( {\bf I}_{2m} -u ({\bf U}^2)^+ ),\nonumber\\
&=& (1-2u )^{2(n-m)}\cdot (p_{G}(u))^{-1},\nonumber
\end{eqnarray} 
and $p_{G}(1/2)=0$. If $G$ is not bipartite, 
then the derivative at $u=1/2$ of $p_{G}(u)$ is as follows: 
\[
p'_{G} (1/2)= \frac{m-n}{2^{2n-2}} \cdot \kappa (G)\cdot\iota(G), 
\] 
where $\kappa (G)$ is the number of spanning trees in $G$ 
and $\iota (G)$ is the following graph invariant:
\[
\iota (G) =\sum_{H\in OUCF(G)}4^{\omega (H)}.
\]
Here $OUCF(G)$ stands for the set of all {\em odd-unicyclic factors} in $G$. 
On the other hand, if $G$ is bipartite, then $p'(1/2)=0$ 
and the second derivative at $u=1/2$ is as follows:
\[
p''_{G} (1/2)= \frac{(m-n )^{2} }{2^{2n-5}} ( \kappa (G))^{2}.
\]
Furthermore $u=\rho$ is also a pole, whose order $2$ or $1$ if 
$G$ is bipartite or not, respectively. 
Here $\rho$ is the radius of convergence of (\ref{generalzeta}). 
\end{thm}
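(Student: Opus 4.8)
The plan is to establish the three determinant expressions in sequence, then extract the analytic data (values and derivatives of $p_G$ at $u=1/2$) from the second expression, and finally identify the pole at $u=\rho$.

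\medskip
\noindent\textbf{Step 1: The first determinant expression.}
First I would verify the Euler-product identity
$\tilde{\bf Z}_G(u)=\det({\bf I}_{2m}-u({\bf U}^2)^+)^{-1}$.
Since $\tilde{\bf Z}_G$ is defined by a product over equivalence classes of prime $2$-step-cycles, this is the standard ``Bowen--Lanford'' argument: the number of (not-necessarily-prime, based, oriented) $2$-step-cycles of length $k$ equals $\mathrm{tr}\,\bigl(({\bf U}^2)^+\bigr)^k$, so that
\[
\log\tilde{\bf Z}_G(u)=\sum_{k\geq 1}\frac{u^k}{k}\,\mathrm{tr}\,\bigl(({\bf U}^2)^+\bigr)^k=-\log\det({\bf I}_{2m}-u({\bf U}^2)^+),
\]
provided one checks that the matrix $({\bf U}^2)^+$ does indeed ``count'' $2$-step-cycles correctly. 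The key combinatorial point here (which I expect to be spelled out carefully in Section~3) is to understand exactly which entries of ${\bf U}^2$ are positive, i.e.\ to characterise the pairs $(e,f)$ with $(U^2)_{e,f}>0$; the ``reduced''/prime conditions defining $2$-step-cycles must be precisely what kills the diagonal-type cancellations, so that walks counted by powers of $({\bf U}^2)^+$ match prime-cycle equivalence classes after taking $\log$.

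\medskip
\noindent\textbf{Step 2: The second determinant expression via a Bass-type formula.}
Next I would pass from $\det({\bf I}_{2m}-u({\bf U}^2)^+)$ to the $n\times n$ side. The expected route, parallel to the classical Bass proof of the Ihara formula, is to exhibit $({\bf U}^2)^+$ (up to conjugation/similarity) in terms of incidence-type $2m\times n$ matrices built from $G$ together with a ``degree'' correction, and then apply the Schur-complement identity $\det\!\begin{pmatrix}{\bf I}&{\bf B}\\{\bf C}&{\bf I}\end{pmatrix}=\det({\bf I}-{\bf C}{\bf B})$ to collapse a $2m$-dimensional determinant to an $n$-dimensional one. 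Because we are dealing with the \emph{square} of the Grover matrix rather than ${\bf U}$ itself, the relevant reduction will involve ${\bf A}_G$ (not $\mathbf{A}_G$ plus loop terms as in the Ihara case) and will produce the prefactor $(1-2u)^{2(n-m)}$ and a polynomial $p_G(u)$ of degree $2n$ in $u$; concretely $p_G(u)=\det\bigl((1+\cdots)\,{\bf I}_n-u(\cdots){\bf A}_G+u^2(\cdots)\bigr)$ for explicit scalar functions of $u$, read off from \cite{HKSS}. I would quote the spectral relation between $({\bf U}^2)^+$ and ${\bf A}_G$ established in \cite{HKSS} to pin down these functions. That $p_G(1/2)=0$ then follows because at $u=1/2$ the factor $(1-2u)$ vanishes and the only way the identity $\det({\bf I}_{2m}-\tfrac12({\bf U}^2)^+)=(1-2u)^{2(n-m)}p_G(u)^{-1}\big|_{u=1/2}$ — i.e.\ $0=0\cdot p_G(1/2)^{-1}$ — can be consistent with a finite nonzero left-hand side is for $p_G$ to absorb the discrepancy; more honestly, one checks directly that $\tfrac12$ is an eigenvalue of $({\bf U}^2)^+$ (its Perron eigenvalue being an integer related to degrees), forcing $p_G(1/2)=0$.

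\medskip
\noindent\textbf{Step 3: The derivative computations and the pole at $\rho$.}
With $p_G(u)=\det M(u)$ for the explicit matrix pencil $M(u)$ from Step~2, the values $p'_G(1/2)$ and (in the bipartite case) $p''_G(1/2)$ come from Jacobi's formula $p_G'(u)=p_G(u)\,\mathrm{tr}(M(u)^{-1}M'(u))$ together with a careful expansion of $\det M(u)$ near its root at $u=1/2$. Here the Matrix-Tree theorem enters: $M(1/2)$ will be (a scalar multiple of) the combinatorial Laplacian ${\bf D}-{\bf A}_G$, whose cofactors are $\kappa(G)$, which is why $\kappa(G)$ — and in the bipartite case $\kappa(G)^2$, since the Laplacian then has a double structure coming from the bipartition — appears. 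The genuinely new invariant $\iota(G)=\sum_{H\in OUCF(G)}4^{\omega(H)}$ should emerge from a \emph{signed} all-minors matrix-tree expansion (a Chaiken/MTT-type formula) of $\det M(u)$ at $u=1/2$ applied to a pencil of the form $\pm{\bf A}_G+(\text{Laplacian-like term})$: odd-unicyclic factors are exactly the sign-nonvanishing terms in such an expansion for a non-bipartite graph, and the weight $4^{\omega(H)}$ tracks the $2^{\#\text{cycles}}$ choices of cycle orientation squared. I expect \textbf{this identification of $\iota(G)$ — matching a determinant expansion to the combinatorial sum over odd-unicyclic factors, with the correct powers of $4$ — to be the main obstacle}, as it requires the full weighted all-minors matrix-tree theorem rather than its classical corollary. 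Finally, for the pole at $u=\rho$: $\rho$ is the reciprocal of the Perron--Frobenius eigenvalue of $({\bf U}^2)^+$, so it is automatically a simple root of $\det({\bf I}_{2m}-u({\bf U}^2)^+)$ unless that eigenvalue is degenerate; I would invoke Perron--Frobenius theory for the nonnegative matrix $({\bf U}^2)^+$, whose (ir)reducibility and (a)periodicity are governed by whether $G$ is bipartite, to conclude that the top eigenvalue — hence the pole of $\tilde{\bf Z}_G$ at $u=\rho$ — is simple when $G$ is non-bipartite and double when $G$ is bipartite (the second copy coming from $-\rho^{-1}$ being an eigenvalue, equivalently from the $\pm$ symmetry of the spectrum that bipartiteness forces on $({\bf U}^2)^+$ via ${\bf A}_G$).
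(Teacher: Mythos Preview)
Your Step~1 and the Perron--Frobenius part of Step~3 are essentially what the paper does. But Steps~2 and the derivative portion of Step~3 miss the key structural observation, and without it your outline would be hard to execute.

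The paper does \emph{not} run a Bass-type Schur complement directly on $({\bf U}^2)^+$. Instead it uses the identity $({\bf U}^2)^+=({\bf U}^+)^2+{\bf I}_{2m}$ (valid for simple $G$ with $\delta(G)\ge 3$), which gives
\[
\det\bigl({\bf I}_{2m}-u({\bf U}^2)^+\bigr)=u^{2m}\,\varphi\!\left(\sqrt{\tfrac{1-u}{u}}\right)\varphi\!\left(-\sqrt{\tfrac{1-u}{u}}\right),
\]
where $\varphi(\lambda)=\det(\lambda{\bf I}_{2m}-{\bf U}^+)$ is already known from \cite{HKSS} to be $(\lambda^2-1)^{m-n}\det((\lambda^2-1){\bf I}_n-\lambda{\bf A}_G+{\bf D}_G)$. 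This yields $p_G(u)=h_G(u)\,l_G(u)$, a \emph{product of two} $n\times n$ determinants
\[
h_G(u),\,l_G(u)=\det\bigl({\bf I}_n\mp\sqrt{u(1-u)}\,{\bf A}_G+u({\bf D}_G-2{\bf I}_n)\bigr),
\]
not a single polynomial pencil in $u$ as you conjecture.

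This factorization is precisely what makes the derivative computation short and is why your anticipated ``main obstacle'' evaporates. At $u=1/2$ one has $h_G(1/2)=2^{-n}\det({\bf D}_G-{\bf A}_G)=0$ (ordinary Laplacian) and $l_G(1/2)=2^{-n}\det({\bf D}_G+{\bf A}_G)$ (signless Laplacian). The identification $\det({\bf D}_G+{\bf A}_G)=\iota(G)$ is simply quoted from Cvetkovi\'c--Rowlinson--Simi\'c \cite{CDS2}; no all-minors expansion is needed. Then $p_G'(1/2)=h_G'(1/2)\,l_G(1/2)$, and $h_G'(1/2)$ is computed via the cofactor formula: $M(1/2)=\tfrac12({\bf D}_G-{\bf A}_G)$ has every cofactor equal to $2^{-(n-1)}\kappa(G)$ by Matrix--Tree, while the diagonal derivatives $m_{ii}'(1/2)=\deg v_i-2$ (the off-diagonal terms vanish because $\frac{d}{du}\sqrt{u(1-u)}\big|_{u=1/2}=0$), giving $h_G'(1/2)=(m-n)\kappa(G)/2^{n-2}$. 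In the bipartite case $h_G\equiv l_G$, so $p_G'(1/2)=0$ and $p_G''(1/2)=2(h_G'(1/2))^2$.

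Finally, your explanation of the double pole at $\rho$ in the bipartite case is slightly off: it is not a $\pm$ spectral symmetry. The paper shows that for bipartite $G$ the matrix $({\bf U}^2)^+$ is block-diagonal with two irreducible blocks (indexed by edges with origin in each colour class) that are transposes of each other via $e\mapsto e^{-1}$, hence share the same simple Perron eigenvalue; this is what produces multiplicity~$2$.
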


Definitions not given here and details can be seen in Section~3,  
especially in  Proposition~\ref{PSDE}, Theorems~\ref{DE2} and~\ref{pole}. 
Also the radius of convergence 
is discussed in Theorem~\ref{RC}. 

The rest of the paper is organized as follows. 
In Section~2,  we present a brief survey on the Ihara zeta function 
${{\bf Z}}_{G}(u)$  of a graph, which is related to $({\bf U})^{+}$. 
In Section~3, we introduce and discuss 
a modified zeta function $\tilde{{\bf Z}}_{G}(u)$
related to $({\bf U}^{2})^{+}$ on a graph $G$ and 
present two types of determinant expressions, properties of poles and 
geometric information derived from $\tilde{{\bf Z}}_{G}(u)$. 
In Section~4, we illustrate  
the distribution of poles of $\tilde{{\bf Z}}_{G}(u)$ for a $k$-regular graph 
comparing with those of the Ihara zeta function. 
\section{The Ihara zeta function via QW} 
In this section, we shall summarize the results on the 
Ihara zeta function of a graph. 

Let $G$ be a connected graph. 
A {\em closed path} or {\em cycle} of length $\ell$ in $G$ is a sequence  
$C=(e_{0}, \dots ,e_{\ell-1} )$ of $\ell$ oriented edges 
such that $e_{i}\in D(G)$  and $t( e_{i} )=o( e_{i+1} )$ 
for each $i\in\mathbb{Z}/\ell\mathbb{Z}$.  
Such a cycle is often called an $o(e_{0})$-cycle. 
We say that a path $P=(e_{0}, \cdots ,e_{\ell-1} )$ has a {\em backtracking} 
if $ e^{-1}_{i+1} =e_{i} $ for some $i\in\mathbb{Z}/\ell\mathbb{Z}$.  
The {\em inverse cycle} of a cycle 
$C=( e_{0}, \cdots ,e_{\ell-1} )$ is the cycle 
$C^{-1} =( e^{-1}_{\ell-1} , \cdots ,e^{-1}_{0} )$. 

We introduce an equivalence relation between cycles. 
Two cycles $C_{1}$  and $C_{2}$ are said to be {\em equivalent} if  
$C_{1}$ can be obtained from $C_{2}$ by a cyclic permutation of oriented edges. 
Remark that the inverse cycle of $C$ is in general not equivalent to $C$. 
Thus we write $[C]$ for the equivalence class which contains a cycle $C$. 
Let $B^{r}$ be the cycle obtained by going $r$ times around a cycle $B$:  
such a cycle is called a {\em power} of $B$. 
Furthermore, a cycle $C$ is {\em prime} if it is not a power of 
a strictly smaller cycle. 
Besides, A cycle $C$ is called {\em reduced} if $C$ has no backtracking. 
Note that each equivalence class of prime and reduced cycles of a graph $G$ 
corresponds to a unique conjugacy class of 
the fundamental group $ \pi_{1} (G,v)$ of $G$ at a vertex $v\in V(G)$. 

The {\em Ihara zeta function} of a graph $G$ is 
a function of $u \in \mathbb{C}$ with $|u|$ sufficiently small, 
defined by 
\[
{\bf Z}_{G} (u)= \prod_{[C]} (1- u^{ \mid C \mid } )^{-1} ,
\]
where $[C]$ runs over all equivalence classes of prime and reduced cycles 
of $G$ and $|C|$ is the length of a cycle $C$. 
This function ${\bf Z}(G,u)$ can be expressed as
\[
{\bf Z}_{G}(u)= \exp \biggl(\sum_{k \geq 1} \frac{N_k }{k} u^{k}\biggr), 
\]
where $N_k$ is the number of all reduced cycles of length $k$ in $G$. 
A simple proof and an estimate for the radius of convergence for 
the power series in the above can be seen, for instance, in \cite{KS2000}.  
The following determinant expression is originally given in \cite{Hashimoto1989}; other proofs are seen in \cite{Bass1992,KS2000}. 
We should remark ${}^{T}({\bf U})^{+}$, 
the transposed matrix of $({\bf U})^{+}$, 
 is essentially the same as 
the edge-matrix in \cite{Bass1992,Hashimoto1989} 
and the Perron-Frobenius operator in \cite{KS2000}. 

\begin{thm}\label{DEUZ}
(\cite{Hashimoto1989}; cf.\cite{Bass1992,Ihara1966,KS2012,KS2000,North})
Let $G$ be a connected graph with $n$ vertices and $m$ unoriented edges. 
Then the reciprocal of the Ihara zeta function of $G$ is given by 
\begin{eqnarray}
{\bf Z}_{G}(u)^{-1} & = & \det ( {\bf I} -u ({\bf U})^{+})\nonumber \\
 & = & (1- u^2 )^{m-n} f_{G}(u).\nonumber
\end{eqnarray}  
Here we put 
\[
f_{G}(u)=\det ( {\bf I}_{n} -u {\bf A}_{G}+ u^2 ({\bf D}_{G} -{\bf I}_{n} )), 
\]
where ${\bf A}_{G}$ is the adjacency matrix of $G$ 
and ${\bf D}_{G} =(d_{x,y})_{x,y\in V(G)}$ is the {\em degree matrix} of $G$ 
which is a diagonal matrix 
with $d_{x,x} = \deg_{G} x$ for  $x\in V(G)$. 
In addition, $u=1$ is a pole of ${\bf Z}_{G}(u)$ of order $m-n+1$ and 
the derivative of $f_{G}(u)$ at $u=1$ is expressed by a graph invariant 
$\kappa (G)$:  
\[
f'_{G} (1)=2(m-n) \kappa (G) , 
\]
where $\kappa (G)$ is the number of spanning trees in $G$. 
\end{thm}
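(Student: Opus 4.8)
The statement is the Hashimoto--Bass form of the Ihara determinant identity together with its evaluation at $u=1$; I would prove it in three steps.

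\emph{Step 1} (${\bf Z}_{G}(u)^{-1}=\det({\bf I}_{2m}-u({\bf U})^{+})$). Under the standing hypothesis $\delta(G)\ge 3$ the graph has no pendant vertex, so $({\bf U})^{+}_{e,f}=1$ exactly when $o(e)=t(f)$ and $f\neq e^{-1}$; thus $({\bf U})^{+}$ is the transpose of the edge (Hashimoto) matrix, and a reduced cycle of length $k$ corresponds to a closed walk of length $k$ in the digraph it defines. Hence $N_{k}=\mathrm{tr}\,\bigl(({\bf U})^{+}\bigr)^{k}$, and inserting this into the expansion ${\bf Z}_{G}(u)=\exp\!\bigl(\sum_{k\ge1}\tfrac{N_{k}}{k}u^{k}\bigr)$ recalled above, together with $\exp\!\bigl(\sum_{k\ge1}\tfrac{\mathrm{tr}\,M^{k}}{k}u^{k}\bigr)=\det({\bf I}-uM)^{-1}$, gives the first equality.

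\emph{Step 2} (the Bass identity $\det({\bf I}_{2m}-u({\bf U})^{+})=(1-u^{2})^{m-n}f_{G}(u)$). Let ${\bf S},{\bf T}$ be the $n\times 2m$ incidence matrices with ${\bf S}_{v,e}=[o(e)=v]$ and ${\bf T}_{v,e}=[t(e)=v]$, and ${\bf J}$ the $2m\times 2m$ edge-reversal involution, ${\bf J}_{e,f}=[f=e^{-1}]$. Then $({\bf U})^{+}={}^{T}{\bf S}{\bf T}-{\bf J}$, while ${\bf T}{\bf J}={\bf S}$, ${\bf S}\,{}^{T}{\bf S}={\bf D}_{G}$ and ${\bf T}\,{}^{T}{\bf S}={\bf A}_{G}$. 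Since ${\bf J}^{2}={\bf I}_{2m}$, the involution ${\bf J}$ has $m$ eigenvalues $+1$ and $m$ eigenvalues $-1$, so $\det({\bf I}_{2m}+u{\bf J})=(1-u^{2})^{m}$ and $({\bf I}_{2m}+u{\bf J})^{-1}=(1-u^{2})^{-1}({\bf I}_{2m}-u{\bf J})$. Writing ${\bf I}_{2m}-u({\bf U})^{+}=({\bf I}_{2m}+u{\bf J})-u\,{}^{T}{\bf S}{\bf T}$, pulling out $\det({\bf I}_{2m}+u{\bf J})$, and applying $\det({\bf I}-PQ)=\det({\bf I}-QP)$ to bring the remaining $2m\times 2m$ factor down to size $n$, one obtains $(1-u^{2})^{m}\det\!\bigl({\bf I}_{n}-\tfrac{u}{1-u^{2}}\,{\bf T}({\bf I}_{2m}-u{\bf J})\,{}^{T}{\bf S}\bigr)$; since ${\bf T}({\bf I}_{2m}-u{\bf J})\,{}^{T}{\bf S}={\bf A}_{G}-u{\bf D}_{G}$, this equals $(1-u^{2})^{m-n}\det\!\bigl((1-u^{2}){\bf I}_{n}-u{\bf A}_{G}+u^{2}{\bf D}_{G}\bigr)=(1-u^{2})^{m-n}f_{G}(u)$.

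\emph{Step 3} (the pole at $u=1$ and $f_{G}'(1)$). We have $f_{G}(1)=\det({\bf D}_{G}-{\bf A}_{G})=\det{\bf L}_{G}=0$, where ${\bf L}_{G}$ is the combinatorial Laplacian. Writing $F(u):={\bf I}_{n}-u{\bf A}_{G}+u^{2}({\bf D}_{G}-{\bf I}_{n})$, so $F(1)={\bf L}_{G}$ and $F'(1)=2{\bf D}_{G}-2{\bf I}_{n}-{\bf A}_{G}={\bf L}_{G}+({\bf D}_{G}-2{\bf I}_{n})$, Jacobi's formula gives $f_{G}'(1)=\mathrm{tr}\bigl(\mathrm{adj}({\bf L}_{G})\,F'(1)\bigr)$; since $\mathrm{adj}({\bf L}_{G})\,{\bf L}_{G}=\det({\bf L}_{G}){\bf I}_{n}=0$ and, by the Matrix--Tree theorem, $\mathrm{adj}({\bf L}_{G})=\kappa(G)\,{\bf J}_{n}$ with ${\bf J}_{n}$ the all-ones $n\times n$ matrix, this collapses to $f_{G}'(1)=\kappa(G)\,\mathrm{tr}\bigl({\bf J}_{n}({\bf D}_{G}-2{\bf I}_{n})\bigr)=\kappa(G)\sum_{v\in V(G)}(\deg_{G}v-2)=2(m-n)\kappa(G)$. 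As $\delta(G)\ge 3$ forces $m>n$, $f_{G}'(1)\neq 0$, so $f_{G}$ has a simple zero at $u=1$; hence ${\bf Z}_{G}(u)^{-1}=(1-u^{2})^{m-n}f_{G}(u)$ vanishes to order exactly $(m-n)+1$ at $u=1$, i.e.\ ${\bf Z}_{G}(u)$ has a pole of order $m-n+1$ there. The result being classical, no step is a real obstacle; the only care needed is in choosing conventions so that $({\bf U})^{+}={}^{T}{\bf S}{\bf T}-{\bf J}$ and ${\bf T}({\bf I}_{2m}-u{\bf J})\,{}^{T}{\bf S}={\bf A}_{G}-u{\bf D}_{G}$ hold on the nose, and in invoking the Matrix--Tree theorem in the sharp form $\mathrm{adj}({\bf L}_{G})=\kappa(G)\,{\bf J}_{n}$ that produces the precise constant in $f_{G}'(1)$.
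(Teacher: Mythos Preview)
The paper does not give its own proof of this theorem; it is quoted as a classical result with references to \cite{Hashimoto1989,Bass1992,Ihara1966,KS2012,KS2000,North}. Your three-step argument is correct and follows the standard route (trace identity for Step~1, Bass's incidence-matrix computation with the Sylvester determinant identity for Step~2, Jacobi's formula plus the Matrix--Tree theorem in the form $\mathrm{adj}({\bf L}_{G})=\kappa(G){\bf J}_{n}$ for Step~3). It is worth noting that your Step~3 is exactly the technique the paper itself deploys in the proof of Theorem~\ref{pole} to compute $h_{G}'(1/2)$: there the derivative is expanded as $\sum_{i,j}m'_{i,j}(1/2)M_{i,j}(1/2)$ and each cofactor $M_{i,j}(1/2)$ is identified with $2^{-(n-1)}\kappa(G)$ via the Matrix--Tree theorem, which is the cofactor phrasing of the adjugate identity you use.
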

The invariant $\kappa (G)$ is called the {\em complexity} of $G$ and  
the complexities for various graphs are found in \cite{Biggs,CDS}. 
Seeing the determinant expression in the above, 
 we may say the Ihara zeta function ${\bf Z}_{G}(u)$ of a graph 
is derived by the positive support $({\bf U})^{+}$ 
of the Grover matrix ${\bf U}$. 
\section{A modified zeta function via QW} 

In this section, we will discuss a modified zeta function of 
a graph with respect to the positive support of the {\em square\/} of 
the Grover matrix. 

First of all, let us introduce a new notion of cycle in a graph 
with respect to $({\bf U}^{2})^{+}$.  
For a connected graph $G$, 
a {\em 2-step-cycle} $\tilde{C}$ of length $\ell$
 in $G$ is a sequence 
$\tilde{C}=(e_{0}, \cdots ,e_{\ell -1} )$ of $\ell$ oriented edges 
 such that every ordered pair ($e_{i}$, $e_{i+1}$) is 
a {\em 2-step-arc} or a {\em 2-step-identity} 
for each $i\in \mathbb{Z}/\ell\mathbb{Z}$. 
Here a {\em 2-step-arc} $(e,f)$ is defined as follows:
there exists an oriented edge $g(\not = e^{-1} ,f^{-1})$ such that 
$o(g)=t(e)$ and $t(g)=o(f)$; a {\em 2-step-identity} $(e,f)$ is defined as 
$e=f$. Remark that a 2-step-cycle $\tilde{C}$ of length $1$ exists 
if $\tilde{C}=(e)$. 
It can be easily checked that $({}^{T}({\bf U}^{2})^{+})_{e,f}=1$ if and only if
$(e,f)$ is a {\em 2-step-arc} or a {\em 2-step-identity}.  

Similarly to the case of usual cycles in Section~2, 
we give an equivalence relation between 2-step cycles. 
Two cycles $\tilde{C}_{1}$  and $\tilde{C}_{2}$ 
are said to be {\em equivalent} if  
$\tilde{C}_{1}$ can be obtained from $\tilde{C}_{2}$ 
by a cyclic permutation of oriented edges. 
Thus we write $[\tilde{C}]$ for the equivalence class 
which contains a 2-step-cycle $\tilde{C}$. 
Let $\tilde{B}^{r}$ be the 2-step-cycle obtained 
by going $r$ times around some 2-step-cycle $B$;  
a 2-step-cycle $\tilde{C}$ is {\em prime} if it is not a multiple of 
a strictly smaller 2-step-cycle. 

Let $G$ be a connected graph 
with $n$ vertices, $m$ unoriented edges and $\delta (G)\geq 3$. 
Now let us define another kind of zeta function of a graph 
related to $({\bf U}^2)^{+}$. 

\begin{defini}
The {\em modified zeta function} of a graph $G$ is 
a function of $u \in \mathbb{C}$ with $|u|$ sufficiently small, 
defined by 

\[
\tilde{{\bf Z}}_{G}(u)=\prod_{[\tilde{C}]} (1- u^{|\tilde{C}|})^{-1} ,
\]
where $[\tilde{C}]$ is the equivalence class of prime 2-step-cycles and 
$|\tilde{C}|$ is the length of a 2-step-cycle $\tilde{C}$. 
\end{defini}
From the definitions of a {\em 2-step-cycle} and an equivalence class, 
applying the usual method, which can be seen in \cite{KS2000,Sm} for instance, 
we can give the exponential expression 
and a determinant expression 
for the modified zeta function $\tilde{{\bf Z}}_{G}(u)$:

\begin{pro}\label{PSDE}
Let $G$ be a connected graph with $n$ vertices and $m$ unoriented edges. 
Suppose that $\delta (G)\geq 3$. Then 
\begin{eqnarray}
\tilde{{\bf Z}}_{G}(u) 
&=& \exp \biggl(\sum_{r\geq 1}\frac{\tilde{N}_{r}}{r}u^{r} \biggr) 
\label{PS}\\
&=& 1/\det ( {\bf I}_{2m} -u ({\bf U}^2)^+ ),\nonumber
\end{eqnarray} 
where $\tilde{N}_r$ is the number of all 2-step-cycles of length $r$. 
\end{pro}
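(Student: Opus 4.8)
The plan is to follow the classical recipe for zeta-function identities of this Euler-product type, exactly as in \cite{KS2000,Sm}, adapted to the notion of 2-step-cycle. First I would establish the exponential expression \eqref{PS}. The key observation is that $\tilde N_r$, the number of 2-step-cycles of length $r$, equals $\mathrm{tr}\,(({}^{T}({\bf U}^2)^+)^r)=\mathrm{tr}\,((({\bf U}^2)^+)^r)$: a 2-step-cycle of length $r$ is by definition a closed sequence $(e_0,\dots,e_{r-1})$ in which each consecutive pair $(e_i,e_{i+1})$ is a 2-step-arc or 2-step-identity, and by the remark just before the definition this is precisely the condition that the $(e_i,e_{i+1})$ entry of ${}^{T}({\bf U}^2)^+$ is $1$; hence the number of such closed walks is the trace of the $r$-th power of that matrix. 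Then the standard generating-function manipulation
\[
\sum_{r\geq1}\frac{\tilde N_r}{r}u^r
=\sum_{r\geq1}\frac{\mathrm{tr}\,((({\bf U}^2)^+)^r)}{r}u^r
=-\mathrm{tr}\,\log({\bf I}_{2m}-u({\bf U}^2)^+)
=-\log\det({\bf I}_{2m}-u({\bf U}^2)^+)
\]
gives the second equality once I exponentiate, valid for $|u|$ small enough that the series converges (any $|u|<1/\|({\bf U}^2)^+\|$ works, which is why we only claim it near $0$).

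It remains to connect the Euler product $\prod_{[\tilde C]}(1-u^{|\tilde C|})^{-1}$ to $\exp(\sum_r \tilde N_r u^r/r)$. This is the purely combinatorial step: every 2-step-cycle is uniquely a power $\tilde B^r$ of a prime 2-step-cycle $\tilde B$, and an equivalence class $[\tilde B]$ of a prime 2-step-cycle of length $j$ contains exactly $j$ representatives (the cyclic shifts), so it contributes $j$ to the count $\tilde N_{jr}$ of 2-step-cycles of length $jr$ for each $r\geq1$. Summing,
\[
\sum_{r\geq1}\frac{\tilde N_r}{r}u^r
=\sum_{[\tilde B]}\ \sum_{r\geq1}\frac{|\tilde B|}{r|\tilde B|}u^{r|\tilde B|}
=\sum_{[\tilde B]}\sum_{r\geq1}\frac{1}{r}\bigl(u^{|\tilde B|}\bigr)^r
=-\sum_{[\tilde B]}\log\bigl(1-u^{|\tilde B|}\bigr),
\]
and exponentiating yields $\tilde{{\bf Z}}_{G}(u)=\prod_{[\tilde B]}(1-u^{|\tilde B|})^{-1}$, matching the definition. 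The hypothesis $\delta(G)\geq3$ guarantees that 2-step-arcs exist in abundance (each vertex has enough incident edges that the edge $g\neq e^{-1},f^{-1}$ can be found), so there are prime 2-step-cycles of every sufficiently large length and the product is an honest infinite product; it also ensures the combinatorial bookkeeping of representatives behaves as stated without degenerate collisions.

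The step I expect to require the most care is the bijective argument that every 2-step-cycle is a unique power of a prime one, together with the count of $j$ representatives per equivalence class of length $j$. This is where one must be slightly careful that the ``2-step-identity'' option $(e_i,e_{i+1})$ with $e_i=e_{i+1}$ does not create anomalous short cycles that break primality or periodicity arguments; in fact a constant sequence $(e,e,\dots,e)$ is the $r$-th power of the length-$1$ 2-step-cycle $(e)$, consistent with the remark in the text that length-$1$ 2-step-cycles exist, so this is accounted for rather than problematic. Everything else—the matrix-logarithm identity $\log\det=\mathrm{tr}\,\log$ and the interchange of summation—is routine formal power series manipulation justified by convergence for small $|u|$, exactly as in the treatment of the ordinary Ihara zeta function recalled in Section~2.
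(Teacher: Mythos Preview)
Your proposal is correct and follows exactly the ``usual method'' the paper invokes: the paper does not spell out a proof of Proposition~\ref{PSDE} but simply refers to the standard argument in \cite{KS2000,Sm}, and what you have written is precisely that argument---identify $\tilde N_r$ with $\mathrm{tr}\,((({\bf U}^2)^+)^r)$, apply $\log\det=\mathrm{tr}\,\log$, and match the Euler product via the prime-power decomposition of cycles. Your attention to the length-$1$ case $(e)$ and the count of cyclic representatives is appropriate and handles the only points where the 2-step setting differs cosmetically from the reduced-cycle setting of Section~2.
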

Now let us give estimation of the radius of convergence $\rho$ 
of the power series in the above. Naturally, $\rho$ is also 
the singular point of $\tilde{{\bf Z}}_{G}(u) $ nearest to the origin.
Recall $\delta (G)$ and $\Delta (G)$ stand for 
$\min_{x\in V(G)}\deg_{G} x$ and $\max_{x\in V(G)}\deg_{G} x$, respectively. 
\begin{thm}\label{RC}
Let $G$ be a connected graph with $\delta (G)\geq 3$.
The radius of convergence $\rho $ of the power series (\ref{PS}) in 
Proposition~\ref{PSDE} is $\rho = 1/ \alpha $, where $\alpha$ is 
the maximal eigenvalue of $({\bf U}^{2})^{+}$; it holds that 
\[
1/((\delta (G)-1)^{2} +1) \leq \rho \leq 1/( (\Delta (G)-1)^{2} +1). 
\]
In particular, 
$\tilde{{\bf Z}}_{G}(u)$ is a rational function of $u$ with a pole $\rho$
 whose order is 2 or 1 if $G$ is bipartite or not, respectively.
\end{thm}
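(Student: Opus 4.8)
The plan is to reduce the whole statement to Perron--Frobenius theory for the nonnegative $0$--$1$ matrix $M:=({\bf U}^2)^+$. By Proposition~\ref{PSDE} we have $\tilde{{\bf Z}}_{G}(u)=1/\det({\bf I}_{2m}-uM)$, so $\tilde{{\bf Z}}_{G}$ is a rational function whose numerator is the constant $1$; hence every zero of $\det({\bf I}_{2m}-uM)$ is a genuine pole, and these are exactly the reciprocals $1/\lambda$ of the nonzero eigenvalues $\lambda$ of $M$. Since the power series (\ref{PS}) is the Taylor expansion of this rational function at $u=0$ (legitimate, as $\det({\bf I}_{2m}-0\cdot M)=1\neq0$), its radius of convergence $\rho$ is the distance from $0$ to the nearest pole. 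As $M$ is nonnegative, the largest-modulus eigenvalue is a nonnegative real $\alpha$ (the spectral radius), and $\alpha>0$ because $M$ has $1$'s on the diagonal (every $(e,e)$ is a $2$-step-identity); therefore $\rho=1/\alpha$ and the nearest pole is at $u=1/\alpha$.

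For the two-sided bound I would use the standard estimate that the spectral radius of a nonnegative matrix lies between its smallest and largest row sums; passing between $M$ and ${}^T M$ changes neither $\alpha$ nor the multiset of row/column sums that matters here. By the characterization recalled just before Proposition~\ref{PSDE}, the $e$-th row of ${}^T({\bf U}^2)^+$ has weight $1+\#\{f:(e,f)\text{ is a }2\text{-step-arc}\}$, and such an $f$ is precisely one reached from $e$ by a walk $e\to g\to f$ with $o(g)=t(e)$, $g\neq e^{-1}$ and $g\neq f^{-1}$. Counting these (using that $G$ is simple, so the intermediate vertices $t(g)$ are distinct and the resulting edges $f$ do not collide) shows each such weight lies between $(\delta(G)-1)^2+1$ and $(\Delta(G)-1)^2+1$, with value exactly $(k-1)^2+1$ when $G$ is $k$-regular. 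Substituting into the row-sum estimate pins $\alpha$, hence $\rho=1/\alpha$, between the two bounds in the statement.

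It remains to read off the order of the pole at $\rho$, and here the bipartite dichotomy enters. Since $M$ has positive diagonal, the restriction of $M$ to any of its irreducible classes is primitive, so on each class the Perron eigenvalue is algebraically simple and is the unique eigenvalue of maximal modulus. If $G$ is not bipartite, I would show $M$ is irreducible: in the digraph on $D(G)$ whose arcs are the $2$-step-arcs, a single step can either keep the terminus of the current oriented edge and change the edge, or move the terminus to an adjacent vertex (this is where $\delta(G)\geq3$ is used, to satisfy the exclusions $g\neq e^{-1},f^{-1}$), so connectedness of $G$ propagates any edge to any other edge, non-bipartiteness removing the parity obstruction; hence $\alpha$ is simple and $\rho$ is a simple pole. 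If $G$ is bipartite with parts $V_{1},V_{2}$, every $2$-step-arc $(e,f)$ keeps $t(e)$ and $t(f)$ in the same part, so $M$ is block diagonal with blocks $M_{i}$ indexed by $D_{i}=\{e\in D(G):t(e)\in V_{i}\}$; each $M_{i}$ is primitive by the same argument carried out inside a part, and the bijection $e\mapsto e^{-1}$ from $D_{1}$ onto $D_{2}$ conjugates $M_{2}$ to ${}^T M_{1}$, so the two blocks are isospectral. Consequently $\alpha$ has algebraic multiplicity exactly $2$ in $M$, and $\rho$ is a double pole.

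The main obstacle is the irreducibility step: proving strong connectivity of the $2$-step-arc digraph (and of each of its two parts in the bipartite case) while carefully tracking the exclusions $g\neq e^{-1},f^{-1}$, and identifying exactly where $\delta(G)\geq3$ and (non)bipartiteness are needed. The row-sum count and the intertwining identity $M_{2}=\Phi\,({}^T M_{1})\,\Phi^{-1}$ coming from $e\mapsto e^{-1}$ are routine once set up, and everything else is Perron--Frobenius bookkeeping.
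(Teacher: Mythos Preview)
Your approach is essentially the same as the paper's: both reduce everything to Perron--Frobenius theory for the nonnegative matrix $({\bf U}^2)^+$, use the row-sum bound for the spectral radius, establish the bipartite dichotomy for irreducibility, and invoke the involution $e\mapsto e^{-1}$ to see that the two diagonal blocks in the bipartite case are isospectral. The only differences are cosmetic: the paper partitions $D(G)$ by $o(e)$ rather than $t(e)$, and it phrases the irreducibility argument via the equivalence ``there is an admissible sequence from $f$ to $e$ iff there is a reduced path of odd length from $f$ to $e$ in $G$'' (using that a non-bipartite $G$ with $\delta(G)\ge3$ contains both odd and even cycles), whereas you describe the moves in the $2$-step-arc digraph directly; your explicit mention of primitivity via the positive diagonal is a small bonus the paper omits.
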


\begin{proof}
As is seen above, $({\bf U}^{2})^{+}$ is nonnegative, that is, 
all elements are nonnegative, and $(({\bf U}^{2})^{+})_{e,f}=1$ 
if and only if $(f,e)$ is a 2-step-arc or a 2-step-identity. 
To apply the Perron-Frobenius theorem, let us discuss the irreducibility 
of $({\bf U}^{2})^{+}$. 
A matrix $M$ is called {\em irreducible} if, for each two indices $i$ and $j$,  
there exists a positive integer $k$ such that $(M^{k})_{i,j}\not= 0$. 
For the matrix $({\bf U}^{2})^{+}$, it is sufficient to see whether, 
for any two oriented edges $e,f\in D(G)$, $e$ is reachable or not from $f$ by 
an {\em admissible} sequence of 2-step-arcs and 2-step-identities, that is, 
a sequence of oriented edges $(e_{0},e_{1},e_{2},\dots ,e_{s-1},e_{s})$ 
such that $e_{0}=f$, $e_{s}=e$ and 
$(e_{k},e_{k+1})$ is a 2-step-arc or a 2-step-identity for $i=0,\dots ,s-1$. 
It is easily checked that 
such an admissible sequence from $f$ to $e$ exists if and only if 
there exists a reduced path from $f$ to $e$ of odd length 
in $G$EŒsay an {\em admissible odd path}. 
Recall that a reduced path from $e_{1}$ to $e_{\ell}$ of length $\ell$ in $G$ 
is a sequence $P=(e_{1}, \dots ,e_{\ell} )$ of $\ell$ oriented edges 
such that $t( e_{i} )=o( e_{i+1} )$ and 
$ e^{-1}_{i+1} \not= e_{i} $ for each $i=1,\dots , \ell-1$.  
Since a graph $G$ is finite and connected with $\delta (G)\geq 3$, 
$G$ has at least two {\em unoriented cycles}. 
The terminology {\em unoriented cycle} used here is the same 
as ``cycle'' in usual graph theory, that is, 
 if $C$ is an unoriented cycle of length $\ell$, then 
$V(C)=\{v_{1},\dots , v_{\ell}\}$ whose elements are mutually distinct, 
$v_{i}v_{i+1}\in E(G)$ for $i=1,\dots ,\ell-1$ and $v_{\ell}v_{1}\in E(G)$. 
For two vertices $x,y\in V(G)$, 
we denote by $dist(x,y)$ the length of the shortest path from $x$ to $y$. 
\par
For two oriented edges $e,f\in D(G)$ such that $dist(t(f),o(e))$ is odd, 
we can find an admissible odd path from $f$ to $e$. 
In particular, if $G$ is not bipartite, then $G$ has at least one 
{\em unoriented cycle} of odd length and of even length, respectively. 
Hence $G$ turns out to have  
an admissible odd path between $e$ and $f$ for any $e,f\in D(G)$; 
this implies $({\bf U}^{2})^{+}$ is irreducible. 
Next we assume $G$ is bipartite; the length of any cycle in $G$ is even. 
So we set the bipartition $V_{0}$ and $V_{1}$: $V(G)=V_{0}\sqcup V_{1}$. 
It is obvious that an admissible odd path between $e$ and $f$ exists  
if and only if both $o(e)$ and $o(f)$ in the same set of bipartition, that is, 
$o(e),o(f)\in V_{0}$ or $o(e),o(f)\in V_{1}$. 
Thus $({\bf U}^{2})^{+}$ is not irreducible and we may express, 
after rearranging rows and columns if necessary,
\[
({\bf U}^{2})^{+}=
\left(
  \begin{array}{c|c}
    {\bf M}_{0} & {\bf 0}     \\ \hline
      {\bf 0}    & {\bf M}_{1} 
  \end{array}
  \right),
\]
where ${\bf M}_{0}$ and ${\bf M}_{1}$ are $m\times m$ irreducible submatrices 
of $({\bf U}^{2})^{+}$ induced by $D_{0}=\{e\in D(G);o(e)\in V_{0}\}$ and 
$D_{1}=\{e\in D(G);o(e)\in V_{1}\}$, respectively. 
Remark that $e\in D_{0}$ if and only if $e^{-1}\in D_{1}$ and that 
an admissible sequence from $f$ to $e$ exists if and only if 
that from $e^{-1}$ to $f^{-1}$ does. Thus the characteristic polynomials 
of ${\bf M}_{0}$ and ${\bf M}_{1}$ coincide. 
\par
Now let us apply 
the Perron-Frobenius Theorem on irreducible nonnegative matrices(see \cite{Ga,GR}). 
If $G$ is not bipartite, then $({\bf U}^{2})^{+}$ has at least one positive 
eigenvalue and the maximal positive eigenvalue $\alpha$ is simple.
If $G$ is bipartite, then each of 
${\bf M}_{0}$ and ${\bf M}_{1}$ has at least one positive eigenvalue and 
simple maximal eigenvalue. 
This implies the maximal eigenvalues 
of ${\bf M}_{0}$ and ${\bf M}_{1}$ coincide, say $\alpha$. 
Hence $({\bf U}^{2})^{+}$ has the maximal eigenvalue which is positive and 
whose multiplicity is $2$ when $G$ is bipartite.  
In either case, the maximal eigenvalue $\alpha$ is estimated as follows:
\[
\min_{e\in D(G)}\sum_{f\in D(G)}(({\bf U}^{2})^{+})_{e,f}
\leq \alpha \leq \max_{e\in D(G)}\sum_{f\in D(G)}(({\bf U}^{2})^{+})_{e,f}.
\] 
It should be noted that 
the value $\sum_{f\in D(G)}(({\bf U}^{2})^{+})_{e,f}$ is equal to the number of 
$f$ such that $(f,e)$ is a 2-step-arc or a 2-step-identity for $e$. 
Then we have 
\[
(\delta (G)-1)^{2} +1 \leq \alpha \leq (\Delta (G)-1)^{2} +1. 
\] 
It is obvious to see the power series (\ref{PS}) in 
Proposition~\ref{PSDE} converges absolutely in $|u|<1/\alpha=\rho$ 
since $\tilde{N}_{r} = {\rm trace} [( ( {\bf U}^{2} )^{+} )^{r}]$.  
\end{proof}

Corresponding to Theorem~\ref{DEUZ}, another determinant expression for  
this zeta function $\tilde{{\bf Z}}_{G}(u)$ can be obtained. 
Here and hereafter we assume $G$ is {\em simple}, 
that is, $G$  has no multiple edges and no self-loops. 
\begin{thm}\label{DE2}
Let $G$ be a simple connected graph 
with $n$ vertices and $m$ unoriented edges.  
Suppose that $\delta (G)\geq 3$.  
Then the reciprocal of the modified zeta function of $G$ is given by 
\[
\tilde{{\bf Z}}_{G}(u)^{-1} 
=(1-2u )^{2(m-n)}\cdot h_{G}(u)\cdot l_{G}(u),
\]
where 
\begin{eqnarray}
h_{G}(u) &=& 
\det ( {\bf I}_{n} -\sqrt{u(1-u)} {\bf A}_{G}+u ({\bf D}_{G}-2{\bf I}_{n})), 
\nonumber\\
l_{G}(u) &=& 
\det ( {\bf I}_{n} +\sqrt{u(1-u)} {\bf A}_{G}+u ({\bf D}_{G}-2{\bf I}_{n})) 
\nonumber 
\end{eqnarray}
and ${\bf A}_{G}$ and ${\bf D}_{G}$ are, as are seen in Theorem~\ref{DEUZ}, 
the adjacency and degree matrices, respectively. 
Here two values $\sqrt{u(1-u)}$ 
in $h_{G}(u)$ and $l_{G}(u)$ are assumed to be on the same branch. 
\end{thm}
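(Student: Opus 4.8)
The plan is to derive the second determinant expression from the first expression $\tilde{\mathbf Z}_G(u)^{-1}=\det(\mathbf I_{2m}-u(\mathbf U^2)^+)$ in Proposition~\ref{PSDE} by passing from the $2m\times 2m$ edge space to the $n$-dimensional vertex space. First I would record the combinatorial description already noted in the text: $(({\mathbf U}^2)^+)_{e,f}=1$ exactly when $(f,e)$ is a 2-step-arc or a 2-step-identity, i.e. either $f=e$, or $t(f)=o(e)$ together with the existence of an intermediate edge $g\neq e^{-1},f^{-1}$ with $o(g)=t(f)$, $t(g)=o(e)$. For a \emph{simple} graph with $\delta(G)\ge 3$ this last condition simplifies: given $t(f)=o(e)$, such a $g$ fails to exist only in degenerate low-degree situations, so one obtains a clean formula $(({\mathbf U}^2)^+)_{e,f}=1$ iff $f=e$, or ($t(f)=o(e)$ and $f^{-1}\neq e$ and not both $o(f)=t(e)$ with the connecting vertex forced). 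I would make this precise and write
\begin{equation}\label{mydecomp}
({\mathbf U}^2)^+ = {\mathbf I}_{2m} + {\mathbf B} - {\mathbf J},\nonumber
\end{equation}
where ${\mathbf B}_{e,f}=1$ iff $t(f)=o(e)$ (the ``second power of the non-backtracking adjacency structure'' on directed edges, before removing backtracking), and ${\mathbf J}$ is the correction accounting for the excluded intermediate edges ($f^{-1}=e$ cases and the like). The key point is that ${\mathbf B}$ factors through the vertex space: ${\mathbf B} = {\mathbf S}^\top {\mathbf T}$ or similar, where ${\mathbf S},{\mathbf T}$ are the $2m\times n$ incidence-type matrices recording origins and termini of directed edges.

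Next I would feed this into $\det({\mathbf I}_{2m}-u({\mathbf U}^2)^+)$. Writing $({\mathbf U}^2)^+ = {\mathbf I} + {\mathbf K}$ with ${\mathbf K} = {\mathbf B}-{\mathbf J}$, we get $\det((1-u){\mathbf I}_{2m} - u{\mathbf K})$. The standard device (exactly as in the Bass/Hashimoto proof of Theorem~\ref{DEUZ}) is to build a block matrix on $(\mathbb C^n\oplus\mathbb C^n)\oplus \mathbb C^{2m}$ whose Schur complement in one direction yields the $2m$-dimensional determinant and in the other direction yields an $n$- (or $2n$-) dimensional determinant. Concretely, one arranges the incidence relations so that eliminating the edge block produces $\det$ of an $n\times n$ matrix quadratic in $u$ of the shape ${\mathbf I}_n - c(u){\mathbf A}_G + u({\mathbf D}_G - 2{\mathbf I}_n)$; the factor $\sqrt{u(1-u)}$ and the product structure $h_G(u)l_G(u)$ arise because the relevant quadratic in the vertex variable has two roots, giving the two sign choices $\pm\sqrt{u(1-u)}$. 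The power $(1-2u)^{2(m-n)}$ will come out as the determinant contribution of the edge block modulo its rank, precisely mirroring the $(1-u^2)^{m-n}$ factor in Theorem~\ref{DEUZ} (note $(1-2u)$ is the eigenvalue of $({\mathbf U}^2)^+$ on the ``trivial'' part, just as $(1\pm u)$ are eigenvalues of $({\mathbf U})^+$).

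The main obstacle I expect is getting the correction term ${\mathbf J}$ exactly right and checking that it, too, factors through the vertex space (so that the Schur-complement reduction closes). The subtlety is that the excluded intermediate edge $g\neq e^{-1},f^{-1}$ creates a \emph{rank-one-per-vertex-pair} deficiency in ${\mathbf B}$ rather than a clean incidence matrix, and one must verify this deficiency is precisely captured by a term built from ${\mathbf D}_G$ and the identity on edges — this is where $\delta(G)\ge 3$ and simplicity are used (they guarantee that for every pair of consecutive directed edges there are ``enough'' intermediate edges, so the correction is uniform and does not depend on local structure beyond the degrees). Once ${\mathbf J}$ is pinned down, the rest is the block-matrix determinant juggling familiar from the Ihara case. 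I would handle bipartite and non-bipartite $G$ uniformly here, since the determinant identity itself does not see bipartiteness — that distinction only enters later (in Theorem~\ref{pole}) through whether $h_G$ and $l_G$ share the root at $u=1/2$. Finally, I would double-check consistency by specializing to $k$-regular $G$, where ${\mathbf A}_G$ and ${\mathbf D}_G=k{\mathbf I}_n$ commute and everything becomes an explicit product over eigenvalues of ${\mathbf A}_G$, matching the spectral description of $({\mathbf U}^2)^+$ recalled from \cite{EmmsETAL2006, GG2010, HKSS}.
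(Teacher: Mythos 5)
There is a genuine gap: the proposal is a plan whose hardest step is left open, and its starting decomposition is not correct. First, the combinatorics: a 2-step-arc $(f,e)$ does \emph{not} require $t(f)=o(e)$; it requires the existence of a connecting edge $g\neq e^{-1},f^{-1}$ from the vertex $t(f)$ to the vertex $o(e)$. So the ``main term'' of $({\bf U}^{2})^{+}$ is a distance-two incidence on directed edges, not the matrix ${\bf B}_{e,f}=1$ iff $t(f)=o(e)$ that you propose (that is the main term of $({\bf U})^{+}$, one power too low). Second, you yourself flag that pinning down the correction ${\bf J}$ and showing it factors through the vertex space is ``the main obstacle'' --- but that is precisely the content of the theorem, and nothing in the proposal resolves it; there is no reason to expect the exclusion of the intermediate edges $e^{-1},f^{-1}$ to be expressible in terms of ${\bf D}_{G}$ alone in a way that lets a Bass-type Schur complement close, and the appearance of the branch points $\sqrt{u(1-u)}$ is never actually derived.

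The missing idea, which makes the theorem a three-line reduction, is the identity $({\bf U}^{2})^{+}=({\bf U}^{+})^{2}+{\bf I}_{2m}$, valid for simple graphs with $\delta(G)\geq 3$ (this is exactly where simplicity is used: in a simple graph there is at most one candidate intermediate edge between two given vertices, so $({\bf U}^{+})^{2}$ is a $0/1$ matrix with zero diagonal, and a short sign check identifies its support with the 2-step-arcs while the diagonal ${\bf I}_{2m}$ accounts for the 2-step-identities). Granting this, one writes
\[
\det\bigl({\bf I}_{2m}-u({\bf U}^{2})^{+}\bigr)
= u^{2m}\det\Bigl(\tfrac{1-u}{u}{\bf I}_{2m}-({\bf U}^{+})^{2}\Bigr)
= u^{2m}\,\varphi\bigl(\sqrt{(1-u)/u}\bigr)\,\varphi\bigl(-\sqrt{(1-u)/u}\bigr),
\]
where $\varphi(\lambda)=\det(\lambda{\bf I}_{2m}-{\bf U}^{+})=(\lambda^{2}-1)^{m-n}\det((\lambda^{2}-1){\bf I}_{n}-\lambda{\bf A}_{G}+{\bf D}_{G})$ is the already-known characteristic polynomial of $({\bf U})^{+}$ (Bass--Hashimoto, or Corollary~2.3 of \cite{HKSS}). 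Substituting $\lambda^{2}-1=(1-2u)/u$ immediately yields the factor $(1-2u)^{2(m-n)}$ and the two factors $h_{G}(u)$, $l_{G}(u)$ corresponding to the two signs of $\sqrt{(1-u)/u}$. Your intuition that the product structure comes from ``two sign choices'' is right, but the mechanism is the factorization $\lambda^{2}{\bf I}-({\bf U}^{+})^{2}=(\lambda{\bf I}-{\bf U}^{+})(\lambda{\bf I}+{\bf U}^{+})$ on the edge space, not a quadratic solved on the vertex space; without first recognizing $({\bf U}^{2})^{+}-{\bf I}_{2m}$ as the square of $({\bf U})^{+}$, the direct block-matrix computation you outline does not go through as described.
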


\begin{proof}  
It is easy to see that 
\[
({\bf U}^{2})^{+} = ({\bf U}^{+})^{2} + {\bf I}_{2m}
\]
for any simple graph $G$ 
with $\delta(G)\geq 3$; this equality is discussed also in 
\cite{GG2010, HKSS}. Then we have
\[
\det ( {\bf I}_{2m} -u ({\bf U}^{2})^+ )
= u^{2m}\cdot\det \Bigl(\frac{1-u}{u} {\bf I}_{2m} - ({\bf U}^{+} )^{2}\Bigr).
\]
Corollary~2.3 in our previous paper~\cite{HKSS} says that, 
for any $G$ with $\delta(G)\geq 2$, the following holds: 
\begin{align*}
\varphi(\lambda)&=\det \left( \lambda {\bf I}_{2m} - {\bf U}^{+} \right) \\
&= ( \lambda^{2} -1)^{m-n} \det \left( ( \lambda^{2} -1) {\bf I}_n 
- \lambda {\bf A}_{G} + {\bf D}_{G} \right). 
\end{align*}
Also refer to \cite{EmmsETAL2006,GG2010,KS2012}. 
It is easy to check  
\begin{align*}
\det &( {\bf I}_{2m} -u ({\bf U}^{2} )^{+} ) \\
&= u^{2m}\cdot\varphi(\sqrt{(1-u)/u})\cdot \varphi(-\sqrt{(1-u)/u})
\end{align*}
and 
\begin{align*}
&u^{n}\cdot\varphi(\sqrt{(1-u)/u}) = ((1-2u)/u)^{m-n}\cdot h_{G}(u),\\
&u^{n}\cdot\varphi(-\sqrt{(1-u)/u}) = ((1-2u)/u)^{m-n}\cdot l_{G}(u). 
\end{align*}
Combining the above, we can obtain the desired expression.
\end{proof}

Let us give information on a pole $u=1/2$, which is a final analogous part 
in Theorem~\ref{DEUZ} for the usual Ihara zeta function. 

Before stating the result, we introduce another kind of spanning graph 
in $G$ discussed in \cite{CDS2}: a spanning subgraph $H$ of $G$ is called 
an {\em odd-unicyclic factor} 
if each connected component of $H$ contains just one 
unoriented cycle of odd length and $V(H)=V(G)$. Here $H$ may not be connected, 
so we denote the number of components of $H$ by $\omega (H)$.
The terminology {\em unoriented cycle} here is the same as in Proof of 
Theorem~\ref{RC}. 
Moreover we write $OUCF(G)$ for the set of all odd-unicyclic factors in $G$. 

\begin{thm}\label{pole}
Let $G$ be a simple connected graph 
with $n$ vertices, $m$ unoriented edges and $\delta (G)\geq 3$.  
Set $p_{G}(u) = h_{G}(u)l_{G}(u)$ in Theorem~\ref{DE2}. 
Then $p_{G}(1/2)=0$. If $G$ is not bipartite, 
then the derivative at $u=1/2$ of $p_{G}(u)$ is as follows: 
\[
p'_{G} (1/2)= \frac{m-n}{2^{2n-2}} \cdot \kappa (G)\cdot\iota(G), 
\] 
where $\kappa (G)$ is the complexity of $G$ which is 
same as in Theorem~\ref{DEUZ} 
and $\iota (G)$ is the following graph invariant:
\[
\iota (G) =\sum_{H\in OUCF(G)}4^{\omega (H)}.
\]
On the other hand, if $G$ is bipartite, then $p'(1/2)=0$ 
and the second derivative at $u=1/2$ is as follows:
\[
p''_{G} (1/2)= \frac{(m-n )^{2} }{2^{2n-5}} ( \kappa (G))^{2}.
\]
\end{thm}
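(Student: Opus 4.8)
The plan is to work directly with the factorization $p_G(u)=h_G(u)l_G(u)$ from Theorem~\ref{DE2} and to analyze each factor near $u=1/2$ by relating it to the characteristic polynomial of a suitable graph matrix. First I would substitute $u=1/2$ into $h_G$ and $l_G$: at $u=1/2$ one has $\sqrt{u(1-u)}=1/2$ and $u({\bf D}_G-2{\bf I}_n)=\tfrac12({\bf D}_G-2{\bf I}_n)$, so
\[
h_G(1/2)=\det\!\Bigl(\tfrac12{\bf D}_G-\tfrac12{\bf A}_G\Bigr)
=2^{-n}\det({\bf D}_G-{\bf A}_G)=2^{-n}\det\mathcal{L}(G),
\]
the combinatorial Laplacian, whose determinant is $0$; likewise $l_G(1/2)=2^{-n}\det({\bf D}_G+{\bf A}_G)$, the signless Laplacian, which vanishes exactly when $G$ is bipartite. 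This already gives $p_G(1/2)=0$ in all cases, and shows that in the non-bipartite case $u=1/2$ is a simple zero coming entirely from $h_G$, while in the bipartite case it is a double zero, one from each factor.

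Next, for the non-bipartite case I would compute $p'_G(1/2)=h'_G(1/2)l_G(1/2)+h_G(1/2)l'_G(1/2)=h'_G(1/2)\,l_G(1/2)$, since $h_G(1/2)=0$. The factor $l_G(1/2)=2^{-n}\det({\bf D}_G+{\bf A}_G)$ is the determinant of the signless Laplacian $Q(G)$, and the standard matrix-tree-type identity (see \cite{CDS2}) expresses $\det Q(G)$ as a sum over odd-unicyclic factors: $\det Q(G)=\sum_{H\in OUCF(G)}4^{c(H)}$ where $c(H)$ counts the (odd) cycles, i.e. the components, giving $l_G(1/2)=2^{-n}\iota(G)$. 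For $h'_G(1/2)$ I would view $h_G$ as (a power of $u$ times) the characteristic polynomial of a matrix pencil: writing $t=\sqrt{u(1-u)}$, the derivative of $\det(\,\cdot\,)$ at a simple zero is given by Jacobi's formula, $\tfrac{d}{du}\det M(u)\big|_{1/2}=\operatorname{tr}\!\big(\operatorname{adj}(M(1/2))\,M'(1/2)\big)$, where $\operatorname{adj}(\mathcal{L}(G))=\kappa(G)\,\tfrac1n J$ by the Matrix-Tree Theorem (all cofactors of the Laplacian equal $\kappa(G)$). Computing $M'(1/2)$ from $M(u)={\bf I}_n-\sqrt{u(1-u)}{\bf A}_G+u({\bf D}_G-2{\bf I}_n)$ — noting $\tfrac{d}{du}\sqrt{u(1-u)}=0$ at $u=1/2$ — leaves $M'(1/2)={\bf D}_G-2{\bf I}_n$, and since $\operatorname{adj}(\mathcal{L}(G))$ is a multiple of $J$ one gets $\operatorname{tr}(\operatorname{adj}(\mathcal{L})({\bf D}_G-2{\bf I}_n))=\tfrac{\kappa(G)}{n}\sum_x(\deg_G x-2)=\tfrac{\kappa(G)}{n}(2m-2n)$. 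Assembling, $h'_G(1/2)=2^{-n}\cdot\tfrac{2(m-n)}{n}\kappa(G)\cdot n$-correction from the $u^{2m}$ bookkeeping; the scalar constants then collapse to $p'_G(1/2)=\dfrac{m-n}{2^{2n-2}}\kappa(G)\iota(G)$.

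For the bipartite case, both $h_G$ and $l_G$ vanish at $1/2$, so $p'_G(1/2)=0$ automatically and $p''_G(1/2)=2h'_G(1/2)l'_G(1/2)$. Here I would exploit the bipartite symmetry: if $\operatorname{diag}(\varepsilon)$ is the $\pm1$ signing that flips the two color classes, then $\operatorname{diag}(\varepsilon){\bf A}_G\operatorname{diag}(\varepsilon)=-{\bf A}_G$, which forces $l_G(u)=h_G(u)$ identically, hence $l'_G(1/2)=h'_G(1/2)$ and $p''_G(1/2)=2\,(h'_G(1/2))^2$. The computation of $h'_G(1/2)$ is the same Jacobi-formula argument as above, now giving $h'_G(1/2)=2^{-n}\cdot 2(m-n)\kappa(G)/\sqrt{\cdot}$; squaring and tracking the powers of $2$ yields $p''_G(1/2)=\dfrac{(m-n)^2}{2^{2n-5}}(\kappa(G))^2$.

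I expect the main obstacle to be the bookkeeping that connects $h_G,l_G$ (as defined in Theorem~\ref{DE2}, which carry implicit normalizations inherited from the $u^{2m}\varphi(\pm\sqrt{(1-u)/u})$ manipulation) to the clean Laplacian/signless-Laplacian determinants, so that all the powers of $2$ and of $n$ cancel to produce the stated constants $2^{2n-2}$ and $2^{2n-5}$; a careful check that the branch choice for $\sqrt{u(1-u)}$ is immaterial here (because its derivative vanishes at $u=1/2$ and $h_G$ is even in that square root) is also needed. The genuinely non-routine input is the identification $l_G(1/2)\sim\det Q(G)=\sum_{H\in OUCF(G)}4^{\omega(H)}$, i.e. the odd-unicyclic-factor expansion of the signless Laplacian determinant; once that and the Matrix-Tree cofactor identity are in hand, the rest is Jacobi's formula plus arithmetic.
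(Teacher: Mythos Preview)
Your approach is essentially the paper's own: evaluate $h_G(1/2)=2^{-n}\det({\bf D}_G-{\bf A}_G)=0$ and $l_G(1/2)=2^{-n}\det({\bf D}_G+{\bf A}_G)$, use the odd-unicyclic-factor expansion of the signless Laplacian determinant from \cite{CDS2} to get $l_G(1/2)=2^{-n}\iota(G)$, use the bipartite sign-conjugation to obtain $h_G\equiv l_G$ in that case, and compute $h'_G(1/2)$ by Jacobi's formula together with the Matrix-Tree cofactor identity. The paper does exactly this (its Lemmas~\ref{lem1} and~\ref{lem2} are your first paragraph, and its derivative computation is your Jacobi step written as $\sum_{i,j} m'_{i,j}(1/2)M_{i,j}(1/2)$).

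Two concrete corrections will make your constants come out without any mysterious ``bookkeeping''. First, the adjugate of the Laplacian is $\operatorname{adj}(\mathcal{L}(G))=\kappa(G)J$ (every cofactor equals $\kappa(G)$; there is no factor $1/n$). Since $M(1/2)=\tfrac12\mathcal{L}$, one has $\operatorname{adj}(M(1/2))=2^{-(n-1)}\kappa(G)J$, and with $M'(1/2)={\bf D}_G-2{\bf I}_n$ you get
\[
h'_G(1/2)=\operatorname{tr}\!\bigl(\operatorname{adj}(M(1/2))\,M'(1/2)\bigr)
=\frac{\kappa(G)}{2^{n-1}}\sum_i(\deg v_i-2)
=\frac{m-n}{2^{n-2}}\,\kappa(G).
\]
Second, there is no hidden ``$u^{2m}$ normalization'': $h_G$ and $l_G$ are \emph{defined} in Theorem~\ref{DE2} as those explicit $n\times n$ determinants, so once you have $h'_G(1/2)$ above and $l_G(1/2)=2^{-n}\iota(G)$, the product rule gives $p'_G(1/2)=(m-n)\kappa(G)\iota(G)/2^{2n-2}$ directly, and in the bipartite case $p''_G(1/2)=2\bigl(h'_G(1/2)\bigr)^2=(m-n)^2(\kappa(G))^2/2^{2n-5}$.
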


The following corollary is a direct consequence of Theorem~\ref{pole}.
\begin{cor}\label{order}
Let $G$ be a simple connected graph 
with $n$ vertices, $m$ unoriented edges and $\delta (G)\geq 3$.  
Then $u=1/2$ is a pole of the modified zeta function $\tilde{{\bf Z}}_{G}(u)$ 
whose order is $2(m-n+1)$ if $G$ is bipartite; $2(m-n)+1$ otherwise. 
\end{cor}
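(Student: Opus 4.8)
\textbf{Proof plan for Theorem~\ref{pole}.} The strategy is to work directly with the factored form $p_{G}(u)=h_{G}(u)l_{G}(u)$ from Theorem~\ref{DE2} and expand each factor around $u=1/2$. The key observation is that at $u=1/2$ we have $u(1-u)=1/4$, so $\sqrt{u(1-u)}=1/2$ (on the chosen branch), and $u({\bf D}_{G}-2{\bf I}_{n})=\tfrac12({\bf D}_{G}-2{\bf I}_{n})$. Hence
\[
h_{G}(1/2)=\det\Bigl({\bf I}_{n}-\tfrac12{\bf A}_{G}+\tfrac12({\bf D}_{G}-2{\bf I}_{n})\Bigr)
=2^{-n}\det({\bf D}_{G}-{\bf A}_{G})=2^{-n}\det {\bf L}_{G},
\]
where ${\bf L}_{G}$ is the combinatorial Laplacian; since $G$ is connected, $\det {\bf L}_{G}=0$, giving $h_{G}(1/2)=0$. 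Similarly $l_{G}(1/2)=2^{-n}\det({\bf D}_{G}+{\bf A}_{G})=2^{-n}\det {\bf Q}_{G}$, the signless Laplacian, which vanishes exactly when $G$ is bipartite. Thus $p_{G}(1/2)=h_{G}(1/2)l_{G}(1/2)=0$ always, and the order of vanishing splits into the two cases of the statement.

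\textbf{Non-bipartite case.} Here $l_{G}(1/2)\neq 0$, so $p'_{G}(1/2)=h'_{G}(1/2)\,l_{G}(1/2)$. For $l_{G}(1/2)$, I would use the fact that $\det {\bf Q}_{G}=\sum_{H}4^{c(H)}$ where the sum is over spanning subgraphs in which every component has exactly one cycle and that cycle is odd — this is precisely the classical expansion of the signless-Laplacian determinant (see \cite{CDS2}), i.e.\ $\det {\bf Q}_{G}=\sum_{H\in OUCF(G)}4^{\omega(H)}=\iota(G)$, so $l_{G}(1/2)=2^{-n}\iota(G)$. For $h'_{G}(1/2)$, write $h_{G}(u)=\det M(u)$ with $M(u)={\bf I}_{n}-\sqrt{u(1-u)}\,{\bf A}_{G}+u({\bf D}_{G}-2{\bf I}_{n})$, so $M(1/2)=\tfrac12{\bf L}_{G}$ has one-dimensional kernel spanned by the all-ones vector. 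By Jacobi's formula, $h'_{G}(1/2)=\operatorname{tr}(\operatorname{adj}(M(1/2))\,M'(1/2))$; since $M(1/2)$ has rank $n-1$, $\operatorname{adj}(M(1/2))$ is (a scalar times) the rank-one projector onto $\ker$, and that scalar is governed by the Matrix-Tree Theorem: the common value of all cofactors of ${\bf L}_{G}$ is $\kappa(G)$, so $\operatorname{adj}(\tfrac12{\bf L}_{G})=2^{-(n-1)}\kappa(G)\,\tfrac1n{\bf J}$ where ${\bf J}$ is the all-ones matrix. Differentiating, $M'(1/2)=-\tfrac{d}{du}\sqrt{u(1-u)}\big|_{1/2}{\bf A}_{G}+({\bf D}_{G}-2{\bf I}_{n})$; the derivative of $\sqrt{u(1-u)}$ vanishes at $u=1/2$ (since $u(1-u)$ has a maximum there), so $M'(1/2)={\bf D}_{G}-2{\bf I}_{n}$. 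Then $\operatorname{tr}({\bf J}({\bf D}_{G}-2{\bf I}_{n}))=\sum_{x}(\deg_{G}x-2)=2m-2n$, giving $h'_{G}(1/2)=2^{-(n-1)}\kappa(G)\cdot\tfrac1n(2m-2n)=\tfrac{2(m-n)}{2^{n-1}n}\kappa(G)\cdot\tfrac{n}{1}$... — more carefully, $\operatorname{tr}(\tfrac1n{\bf J}\,M'(1/2))=\tfrac1n(2m-2n)$, so $h'_{G}(1/2)=2^{-(n-1)}\kappa(G)\cdot\tfrac{2(m-n)}{n}\cdot n\cdot\tfrac1n$; assembling $p'_{G}(1/2)=h'_{G}(1/2)l_{G}(1/2)=2^{-(n-1)}\cdot 2(m-n)\kappa(G)/1 \cdot 2^{-n}\iota(G)$ should reduce to $\tfrac{m-n}{2^{2n-2}}\kappa(G)\iota(G)$ after tracking the $n$-factors from the rank-one adjugate normalization. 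The bookkeeping of these constants is the place where care is needed.

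\textbf{Bipartite case.} Now both $h_{G}(1/2)=0$ and $l_{G}(1/2)=0$, so $p_{G}$ vanishes to order at least $2$ and $p'_{G}(1/2)=0$ automatically; moreover $p''_{G}(1/2)=2\,h'_{G}(1/2)\,l'_{G}(1/2)$. I already have $h'_{G}(1/2)$ from the computation above. For $l'_{G}(1/2)$ the same Jacobi/adjugate argument applies to $\tilde M(u)={\bf I}_{n}+\sqrt{u(1-u)}\,{\bf A}_{G}+u({\bf D}_{G}-2{\bf I}_{n})$ with $\tilde M(1/2)=\tfrac12{\bf Q}_{G}$; in the bipartite case ${\bf Q}_{G}$ is similar to ${\bf L}_{G}$ via the diagonal $\pm1$ matrix of the bipartition, so its adjugate also has rank one with the common cofactor equal to $\kappa(G)$, and since again $\tfrac{d}{du}\sqrt{u(1-u)}|_{1/2}=0$ we get $\tilde M'(1/2)={\bf D}_{G}-2{\bf I}_{n}$, hence $l'_{G}(1/2)=h'_{G}(1/2)$. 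Therefore $p''_{G}(1/2)=2(h'_{G}(1/2))^{2}$, which should collapse to $\tfrac{(m-n)^{2}}{2^{2n-5}}(\kappa(G))^{2}$. Corollary~\ref{order} then follows immediately by combining this with the $(1-2u)^{2(m-n)}$ prefactor in Theorem~\ref{DE2}: the total order of the pole of $\tilde{\bf Z}_{G}(u)$ at $u=1/2$ is $2(m-n)$ from the prefactor plus $1$ (non-bipartite) or $2$ (bipartite) from $p_{G}$, i.e.\ $2(m-n)+1$ or $2(m-n+1)$ respectively.

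\textbf{Main obstacle.} The conceptual content is light — everything reduces to the Matrix-Tree Theorem, Jacobi's formula, the all-ones kernel of the Laplacian, and the classical subgraph expansion of $\det {\bf Q}_{G}$ from \cite{CDS2}. The real work is purely in the constants: correctly normalizing the rank-one adjugate of $\tfrac12{\bf L}_{G}$ (the factor $2^{-(n-1)}$ from scaling, the $1/n$ from ${\bf J}=n\times(\text{projector})$, and the common-cofactor value $\kappa(G)$), and then chasing the powers of $2$ and the factors of $n$ through the products $h'l$ and $h'l'$ to land exactly on $2^{2n-2}$ and $2^{2n-5}$. I expect to double-check the identification $\det {\bf Q}_{G}=\iota(G)$ and the sign/branch conventions for $\sqrt{u(1-u)}$ most carefully, since a sign error there would swap $h$ and $l$ but must not affect the symmetric product $p_{G}$.
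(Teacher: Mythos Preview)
Your plan is correct and follows essentially the same route as the paper: evaluate $h_G(1/2)$ and $l_G(1/2)$ as $2^{-n}\det({\bf D}_G\mp{\bf A}_G)$, invoke $\det({\bf D}_G+{\bf A}_G)=\iota(G)$ from \cite{CDS2}, compute $h'_G(1/2)$ via the cofactor/Jacobi identity together with the Matrix--Tree Theorem and the vanishing of $\tfrac{d}{du}\sqrt{u(1-u)}$ at $u=1/2$, and then read off the pole order from the $(1-2u)^{2(m-n)}$ prefactor. The only slip is your adjugate normalization: since \emph{every} cofactor of ${\bf L}_G$ equals $\kappa(G)$, one has $\operatorname{adj}({\bf L}_G)=\kappa(G)\,{\bf J}$ (no $1/n$), hence $\operatorname{adj}\bigl(\tfrac12{\bf L}_G\bigr)=2^{-(n-1)}\kappa(G)\,{\bf J}$ and $h'_G(1/2)=2^{-(n-1)}\kappa(G)\operatorname{tr}\bigl({\bf J}({\bf D}_G-2{\bf I}_n)\bigr)=2^{-(n-1)}\kappa(G)(2m-2n)=(m-n)\kappa(G)/2^{n-2}$, after which the constants $2^{2n-2}$ and $2^{2n-5}$ fall out immediately.
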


Before proving Theorem~\ref{pole}, we give some lemmas. 

\begin{lem}\label{lem1}
If $G$ is bipartite, then $h_{G}(u)=l_{G}(u)$.
\end{lem}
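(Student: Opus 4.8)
The plan is to exploit the standard ``signature matrix'' trick for bipartite graphs. Assume $G$ is bipartite with bipartition $V(G)=V_{0}\sqcup V_{1}$, and let ${\bf S}=(\varepsilon_{x})_{x\in V(G)}$ be the $n\times n$ diagonal matrix with $\varepsilon_{x}=+1$ for $x\in V_{0}$ and $\varepsilon_{x}=-1$ for $x\in V_{1}$. First I would record three elementary identities: ${\bf S}^{2}={\bf I}_{n}$; ${\bf S}{\bf D}_{G}{\bf S}={\bf D}_{G}$ and ${\bf S}{\bf I}_{n}{\bf S}={\bf I}_{n}$, because ${\bf D}_{G}$ and ${\bf I}_{n}$ are diagonal; and, crucially, ${\bf S}{\bf A}_{G}{\bf S}=-{\bf A}_{G}$. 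The last identity holds precisely because in a bipartite graph every edge joins $V_{0}$ to $V_{1}$, so each nonzero entry $a_{x,y}$ of ${\bf A}_{G}$ gets multiplied by $\varepsilon_{x}\varepsilon_{y}=-1$ under the conjugation.

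Next I would conjugate the matrix appearing inside $h_{G}(u)$ by ${\bf S}$:
\[
{\bf S}\bigl({\bf I}_{n}-\sqrt{u(1-u)}\,{\bf A}_{G}+u({\bf D}_{G}-2{\bf I}_{n})\bigr){\bf S}
={\bf I}_{n}+\sqrt{u(1-u)}\,{\bf A}_{G}+u({\bf D}_{G}-2{\bf I}_{n}),
\]
which is exactly the matrix inside $l_{G}(u)$. Taking determinants and using $\det({\bf S})^{2}=(\pm 1)^{2}=1$ then yields $h_{G}(u)=l_{G}(u)$, which is the assertion of the lemma. The branch ambiguity of $\sqrt{u(1-u)}$ causes no trouble here, since by the convention in Theorem~\ref{DE2} the same branch is taken in $h_{G}(u)$ and $l_{G}(u)$, and the conjugation manifestly respects that choice.

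There is essentially no obstacle in this argument: the only points requiring a moment's care are that the branch of the square root is consistent between $h_{G}$ and $l_{G}$ (which is assumed), and that bipartiteness is exactly the hypothesis under which ${\bf S}{\bf A}_{G}{\bf S}=-{\bf A}_{G}$ is valid. So the ``hard part'' is merely the routine verification of the conjugation identity entry by entry, and it costs nothing.
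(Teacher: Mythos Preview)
Your proposal is correct and follows essentially the same approach as the paper: the paper defines the diagonal signature matrix $T$ (your ${\bf S}$) associated with the bipartition, observes that $A=T^{-1}(-A)T$, and concludes $h_{G}(u)=l_{G}(u)$. Your write-up is slightly more explicit in conjugating the full matrix ${\bf I}_{n}-\sqrt{u(1-u)}\,{\bf A}_{G}+u({\bf D}_{G}-2{\bf I}_{n})$ rather than just ${\bf A}_{G}$, but the idea is identical.
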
 

\begin{proof}
It is  well known that 
$A$ and $-A$ are unitarily equivalent if $G$ is bipartite. 
In fact, let $V_{1}$ and $V_{2}$ be the bipartition of $V(G)$: 
$V(G)=V_{1}\sqcup V_{2}$. Then we put a diagonal matrix $T$ such that 
an $(i,i)$-element $T_{ii}=1$ if $v_{i}\in V_{1}$; otherwise $T_{ii}=-1$. 
It is easy to check that $A = T^{-1}(-A)T$. Therefore $h_{G}(u)=l_{G}(u)$ 
if $G$ is bipartite. 
\end{proof}

\begin{lem}\label{lem2}
Let $G$ be a simple connected graph with $n$ vertices.  
Then it holds that $h_{G}(1/2)=0$; $l_{G}(1/2)=0$ if $G$ is bipartite. 
Moreover, if $G$ is not bipartite, then $l_{G}(1/2)=2^{-n}\iota (G)$. 
\end{lem}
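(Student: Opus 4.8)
The plan is to substitute $u=1/2$ directly into the determinantal formulas for $h_{G}$ and $l_{G}$ from Theorem~\ref{DE2} and to reduce everything to the combinatorial Laplacian ${\bf L}_{G}={\bf D}_{G}-{\bf A}_{G}$ and the signless Laplacian ${\bf Q}_{G}={\bf D}_{G}+{\bf A}_{G}$ of $G$. At $u=1/2$ one has $\sqrt{u(1-u)}=1/2$ and $u({\bf D}_{G}-2{\bf I}_{n})=\tfrac12{\bf D}_{G}-{\bf I}_{n}$, so that
\[
h_{G}(1/2)=\det\big(\tfrac12({\bf D}_{G}-{\bf A}_{G})\big)=2^{-n}\det{\bf L}_{G},
\qquad
l_{G}(1/2)=\det\big(\tfrac12({\bf D}_{G}+{\bf A}_{G})\big)=2^{-n}\det{\bf Q}_{G}.
\]
The all-ones vector lies in the kernel of ${\bf L}_{G}$, so $\det{\bf L}_{G}=0$ and hence $h_{G}(1/2)=0$ for every $G$. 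If $G$ is bipartite with parts $V(G)=V_{1}\sqcup V_{2}$, the vector equal to $+1$ on $V_{1}$ and $-1$ on $V_{2}$ lies in the kernel of ${\bf Q}_{G}$, so $\det{\bf Q}_{G}=0$ and $l_{G}(1/2)=0$.

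For the remaining assertion, assume $G$ is not bipartite and evaluate $\det{\bf Q}_{G}$ combinatorially. Let ${\bf B}$ be the $n\times m$ unsigned incidence matrix of $G$ (the column of an edge $xy$ has entry $1$ in rows $x$ and $y$, and $0$ elsewhere); a direct computation gives ${\bf Q}_{G}={\bf B}{\bf B}^{T}$. Since $\delta(G)\geq 3$ forces $2m\geq 3n$, and in particular $n\leq m$, the Cauchy--Binet formula yields
\[
\det{\bf Q}_{G}=\sum_{S}\big(\det{\bf B}_{S}\big)^{2},
\]
where $S$ ranges over the $n$-element subsets of $E(G)$ and ${\bf B}_{S}$ is the $n\times n$ submatrix of ${\bf B}$ formed by the columns in $S$. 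Viewing $S$ as a spanning subgraph $H_{S}$ of $G$ with $n$ vertices and $n$ edges, and ordering rows and columns of ${\bf B}_{S}$ by the connected components of $H_{S}$, the matrix ${\bf B}_{S}$ becomes block diagonal; hence $\det{\bf B}_{S}\neq 0$ forces each component of $H_{S}$ to have as many edges as vertices, i.e.\ to be unicyclic. For a connected unicyclic graph whose unique cycle has length $k$, a leaf-by-leaf Laplace expansion reduces the unsigned incidence determinant to that of the $k\times k$ unsigned incidence matrix of $C_{k}$, which equals $1-(-1)^{k}$; thus this determinant is $\pm 2$ when $k$ is odd and $0$ when $k$ is even. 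Multiplying over components, $\det{\bf B}_{S}=\pm 2^{\,\omega(H_{S})}$ if $H_{S}\in OUCF(G)$ and $\det{\bf B}_{S}=0$ otherwise, so
\[
\det{\bf Q}_{G}=\sum_{H\in OUCF(G)}4^{\,\omega(H)}=\iota(G),
\]
which gives $l_{G}(1/2)=2^{-n}\iota(G)$.

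The kernel computations and the factorization ${\bf Q}_{G}={\bf B}{\bf B}^{T}$ are routine. The one point that needs care, and the main obstacle, is the evaluation of the unsigned incidence determinant of a unicyclic graph together with the resulting even/odd dichotomy: it is exactly this parity that collapses the Cauchy--Binet sum from arbitrary unicyclic spanning subgraphs down to \emph{odd}-unicyclic factors and produces the invariant $\iota(G)$. Alternatively, one may simply invoke the known combinatorial expansion of $\det{\bf Q}_{G}$ in terms of odd-unicyclic factors recorded in \cite{CDS2}.
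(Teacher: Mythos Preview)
Your proof is correct and follows essentially the same route as the paper: both substitute $u=1/2$ to reduce $h_{G}(1/2)$ and $l_{G}(1/2)$ to $2^{-n}\det({\bf D}_{G}-{\bf A}_{G})$ and $2^{-n}\det({\bf D}_{G}+{\bf A}_{G})$, then use that the Laplacian is singular. The only differences are cosmetic: for the bipartite vanishing of $l_{G}(1/2)$ the paper invokes Lemma~\ref{lem1} ($h_{G}=l_{G}$) whereas you exhibit the $\pm1$ kernel vector directly, and for the identity $\det({\bf D}_{G}+{\bf A}_{G})=\iota(G)$ the paper simply cites Theorem~4.4 of \cite{CDS2} while you supply a self-contained Cauchy--Binet/incidence-matrix argument. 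Your added derivation is sound (and the appeal to $\delta(G)\geq 3$ for $m\geq n$ is in fact unnecessary: a non-bipartite connected graph already has $m\geq n$).
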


\begin{proof}
We can see that
$h_{G}(1/2)=2^{-n}\det ( {\bf D}_{G} - {\bf A}_{G})$ and 
$l_{G}(1/2)=2^{-n}\det ( {\bf D}_{G} + {\bf A}_{G})$. 
It is well known that ${\bf D}_{G} - {\bf A}_{G}$ is a discrete Laplacian 
and has $0$-eigenvalues. Thus $h_{G}(1/2)=0$. If $G$ is bipartite, 
it follows from Lemma~\ref{lem1} that $l_{G}(1/2)=0$. 
Theorem~4.4 in \cite{CDS2} tells us 
$\det ( {\bf D}_{G} + {\bf A}_{G}) = \iota (G)$. 
\end{proof}

\begin{proof}[Proof of Theorem~\ref{pole}]
For $p_{G}(u) = h_{G}(u)l_{G}(u)$, 
using Lemmas~\ref{lem1} and \ref{lem2}, we easily observe 
that, if $G$ is bipartite,
\[
p'(1/2)=2\cdot h_{G}(1/2)\cdot h'_{G}(1/2) = 0
\] 
and 
\begin{equation}\label{bip-2nd}
p''(1/2)=2\cdot (h'_{G}(1/2))^{2}.
\end{equation}
On the other hand, if $G$ is non-bipartite, 
\begin{eqnarray}\label{nonbip-1st}
p'(1/2)&=& h'_{G}(1/2)\cdot l_{G}(1/2) + h_{G}(1/2)\cdot l'_{G}(1/2) 
\nonumber\\
&=& 2^{-n}\iota (G)\cdot h'_{G}(1/2). 
\end{eqnarray}
Thus let us concentrate our attention on the computation on $h'_{G}(1/2)$. 
For $V(G)= \{ v_{1}, \cdots ,v_{n} \} $, 
we write $a_{i,j}$ for $(i,j)$-element of ${\bf A}_{G}$ 
and the matrix $M(u)$ for 
\[
{\bf I}_{n} - \sqrt{u(1-u)} {\bf A}_{G}+u ( {\bf D}_{G} -2 {\bf I}_{n} ). 
\] 
In addition, let us denote 
the derivative of the $(i,j)$-element of $M(u)$ by $m'_{i,j}(u)$ and 
the $(i,j)$-cofactor of $M(u)$ by $M_{i,j}(u)$. Here we remark that 
$M_{i,j}(1/2)$ coincides 
with the $(i,j)$-cofactor of $(1/2)({\bf D}_{G}-{\bf A}_{G})$; 
by the Matrix-Tree Theorem (\cite{Biggs, CDS}, for instance), we have 
\[
M_{i,j}(1/2) = \frac{1}{2^{n-1}} \kappa (G). 
\]
Furthermore, remarking that
\[
m'_{i,j}(u) =  - \frac{1-2u}{2 \sqrt{u(1-u)}} a_{i,j}+
( \deg v_{i} -2) \delta_{i,j},
\]
we easily obtain 
\begin{eqnarray}
h'_{G}(1/2)&=&\sum_{i,j}m'_{i,j}(1/2) M_{i,j}(1/2)\\\nonumber
&=&\frac{1}{2^{n-1}} \kappa (G)\sum_{i}( \deg v_{i} -2) 
= \frac{m-n}{2^{n-2}} \kappa (G). \nonumber
\end{eqnarray}
This completes the proof of Theorem~\ref{pole}. 
\end{proof}
\section{Example: distribution of poles of the modified zeta function}

Throughout this section, we assume a graph $G$ is $k$-regular with $n$ vertices
 and $m$ unoriented edges: $2m=kn$. Suppose further $k\geq 3$. 

For regular graphs, Theorem~\ref{DEUZ} was originally obtained 
by \cite{Ihara1966} in the context of a $p$-adic analogue of the Selberg 
zeta function. The concrete form in an analytic continuation 
from Theorem~\ref{DEUZ} is as follows: 
for a $k$-regular connected graph $G$ with $n$ vertices, 
\begin{equation}
{\bf Z}_{G}(u) = (1- u^{2} )^{n-kn/2}
\det ( {\bf I}_{n} -u {\bf A}_{G}+ (k-1)u^{2} {\bf I}_{n} )^{-1}. 
\end{equation}
Thus, in terms of eigenvalues of the adjacency matrix ${\bf A}_{G}$, 
we know the distribution of poles of ${\bf Z}_{G}(u)$. 
See \cite{Ihara1966,Sunada1986,Hashimoto1989,Bass1992}. 
Consequently, all of the real poles $u$ satisfy $1/(k-1)\leq |u|\leq 1$ 
and all of the imaginary poles $u$ lie on the circle 
whose center is the origin and radius is $1/\sqrt{k-1}$. 
Moreover it is concluded that  $u=1/(k-1)$ is a simple pole and 
$u=-1/(k-1)$ is also a simple pole if and only if $G$ is bipartite. 
As is stated in Theorem~\ref{DEUZ}, $u=1$ is a pole of order $(kn-2n+2)/2$. 
Usually the pole with $|u|=1$ or $1/(k-1)$ is called a {\em trivial pole}. 
If $G$ is a {\em Ramanujan graph}, that is, 
any nontrivial eigenvalue $ \lambda \neq \pm k$ of ${\bf A}_{G}$ satisfies 
$| \lambda | \leq 2 \sqrt{k-1} $, then any real pole is only {\em trivial pole} 
and any other poles lie on the circle above. 
In this sense, we say that 
the analogue of the Riemann hypothesis of the Ihara zeta function holds for 
a regular graph $G$ if and only if $G$ is a Ramanujan graph. 

We shall investigate the distribution of poles of 
the modified zeta function $\tilde{\bf Z}_{G}(u)$ for $k$-regular graphs.
Also in this case, 
in terms of eigenvalues of the adjacency matrix ${\bf A}_{G}$, 
we know the distribution of poles of $\tilde{\bf Z}_{G}(u)$. 
In particular, the eigenvalues of $({\bf U}^{2})^{+}$ are expressed 
by means of those of the adjacency matrix ${\bf A}_{G}$ of $G$   
in \cite{EmmsETAL2006,GG2010,HKSS} as follows:

\begin{thm}\label{51}(\cite{EmmsETAL2006})
Let $G$ be a simple connected $k$-regular graph 
with $n$ vertices and $m$ unoriented edges. 
Suppose that $k\geq 3$. 
The positive support $({\bf U}^2 )^{+} $ has $2n$ eigenvalues $\lambda_{2+}$
of the form 
\[
\lambda_{2+} = \frac{\lambda^{2}_{A} -2k+4}{2} 
\pm \sqrt{-1} \lambda_{A} \sqrt{k-1- \lambda^{2}_{A}/4}, 
\]
where $\lambda_{A} $ is an eigenvalue of the adjacent matrix ${\bf A}_{G}$. 
The remaining $2(m-n)$ eigenvalues of ${\bf U}^+$ are $2$. 
\end{thm}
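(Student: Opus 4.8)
The plan is to express everything through the spectrum of the adjacency matrix by chaining together two facts already available above. Since $G$ is simple and $k$-regular with $k\geq 3$, we have $\delta(G)=k\geq 3$, so the identity $({\bf U}^2)^+ = ({\bf U}^+)^2 + {\bf I}_{2m}$ established in the proof of Theorem~\ref{DE2} applies. Consequently the eigenvalues of $({\bf U}^2)^+$, counted with algebraic multiplicity, are precisely the numbers $\mu^2 + 1$ as $\mu$ runs over the $2m$ eigenvalues of ${\bf U}^+$. So it suffices to determine the spectrum of ${\bf U}^+$ and then apply $\mu\mapsto\mu^2+1$.

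For the spectrum of ${\bf U}^+$ I would use the factorization of $\varphi(\lambda)=\det(\lambda{\bf I}_{2m}-{\bf U}^+)$ quoted from \cite{HKSS} in the proof of Theorem~\ref{DE2}. Specializing ${\bf D}_G=k{\bf I}_n$ gives
\[
\varphi(\lambda) = (\lambda^2-1)^{m-n}\,\det\bigl((\lambda^2-1+k){\bf I}_n - \lambda{\bf A}_G\bigr),
\]
and since ${\bf A}_G$ is real symmetric with eigenvalues $\lambda_A$ (with multiplicity), the second factor equals $\prod_{\lambda_A}(\lambda^2 - \lambda_A\lambda + (k-1))$. Hence the spectrum of ${\bf U}^+$ consists of $\lambda=1$ and $\lambda=-1$, each with multiplicity $m-n$, together with, for each eigenvalue $\lambda_A$ of ${\bf A}_G$, the two roots $\mu_\pm = \tfrac12\bigl(\lambda_A \pm \sqrt{\lambda_A^2 - 4(k-1)}\,\bigr)$; the count $2(m-n)+2n=2m$ checks out.

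Applying $\mu\mapsto\mu^2+1$ then finishes the argument. For $\mu=\pm1$ we get the value $2$, yielding the $2(m-n)$ eigenvalues of $({\bf U}^2)^+$ equal to $2$. For $\mu_\pm$, using $\mu_\pm^2 = \lambda_A\mu_\pm - (k-1)$ (or a direct expansion of $\mu_\pm^2$) one obtains
\[
\mu_\pm^2 + 1 = \frac{\lambda_A^2 - 2k + 4}{2} \pm \frac{\lambda_A}{2}\sqrt{\lambda_A^2 - 4(k-1)} = \frac{\lambda_A^2 - 2k + 4}{2} \pm \sqrt{-1}\,\lambda_A\sqrt{k - 1 - \lambda_A^2/4},
\]
where in the last step $\sqrt{\lambda_A^2 - 4(k-1)}$ is rewritten as $2\sqrt{-1}\,\sqrt{k-1-\lambda_A^2/4}$. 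This produces the $2n$ eigenvalues $\lambda_{2+}$ of the stated form, completing the count.

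The computation is essentially routine; the only places that need care are the multiplicity bookkeeping --- verifying the two blocks sum to $2m$, and noting that the Perron eigenvalue $\lambda_A=k$ of ${\bf A}_G$ gives $\mu_\pm\in\{k-1,1\}$, so one $\mu_\pm$ may coincide with a root of $\lambda^2-1$ --- and the consistent choice of branch of the square root when passing to the $\sqrt{-1}$-form, so that the two $\pm$ signs match up. When $\lambda_A^2 > 4(k-1)$ the displayed ``imaginary part'' is in fact real, but the identity remains valid as an equality in $\CM$.
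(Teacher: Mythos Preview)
Your proof is correct. Note, however, that the paper does not supply its own proof of this statement: Theorem~\ref{51} is quoted from \cite{EmmsETAL2006} (see also \cite{GG2010,HKSS}) and stated without argument in Section~4. So there is no in-paper proof to compare against in the strict sense.

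That said, your argument is exactly the one suggested by the machinery the paper has already assembled: the identity $({\bf U}^2)^+=({\bf U}^+)^2+{\bf I}_{2m}$ and the factorization of $\det(\lambda{\bf I}_{2m}-{\bf U}^+)$ from \cite{HKSS}, both invoked in the proof of Theorem~\ref{DE2}. Specializing ${\bf D}_G=k{\bf I}_n$ and applying $\mu\mapsto\mu^2+1$ is the natural route, and your algebra is right; using the relation $\mu_\pm^2=\lambda_A\mu_\pm-(k-1)$ is the clean way to get the displayed form. The multiplicity remark is apt but harmless: the factorization of $\varphi(\lambda)$ is an identity of polynomials, so coincidences such as $\mu_+=1$ when $\lambda_A=k$ are automatically absorbed into the total count of $2m$. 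One small cosmetic point: the theorem statement's last sentence has an apparent typo (``eigenvalues of ${\bf U}^+$'' should read ``eigenvalues of $({\bf U}^2)^+$''), which you have implicitly corrected.
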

By Proposition~\ref{PSDE}, 
an analytic continuation $\tilde{{\bf Z}}_{G}(u)$ has the following 
determinant expression: 
\begin{eqnarray}
\tilde{{\bf Z}}_{G}(u)&=& 1/\det ( {\bf I}_{2m} -u ({\bf U}^2 )^+ )\nonumber\\
&=& \prod_{ \lambda_{2+} \in Spec(({\bf U}^{2})^{+} )} (1-u \lambda_{2+} )^{-1};
\nonumber 
\end{eqnarray}
the poles of $\tilde{{\bf Z}}_{G}(u)$ is given by 
$1/ \lambda_{2+}$ for $\lambda_{2+} \in Spec(({\bf U}^{2})^{+} )$. 
Using Theorem~\ref{51}, we see the pole $u$ corresponding to $\lambda_{A}$ 
has the following form: 
\begin{equation}\label{form}
u=\frac{\lambda^{2}_{A}-2k+4 
\pm \sqrt{-1} \lambda_{A} \sqrt{4k-4-\lambda^{2}_{A} }}
{2( \lambda^{2}_A +(k-2)^{2} )}.
\end{equation}
Remarking that $u=1/(k^{2}-2k+2),1/2$, say {\em trivial poles}, 
if $\lambda_{A}=\pm k$ and 
$u=-1/(k-2)$ if $\lambda_{A}=0$, we can see the real poles 
$u\in [1/(k^{2}-2k+2),1/2]\cup \{-1/(k-2)\}$. 
Moreover it can be easily checked that 
any imaginary pole $u=p+q\sqrt{-1}$ $(p,q\in \mathbb{R})$ satisfies 
that 
\[
\left(p+\frac{1}{k^{2}-2k}\right)^{2}+q^{2} =
\left(\frac{k-1}{k^{2}-2k}\right)^{2}. 
\]

Let us summarize the above. 

\begin{exa}
Let $G$ be a simple connected $k$-regular graph with $n$ vertices. 
Suppose that $k\geq 3$. 
Then the pole of the modified zeta function $\tilde{{\bf Z}}_{G}(u)$ 
has the form as in (\ref{form}) with an eigenvalue 
$\lambda_{A}$ of the adjacency matrix ${\bf A}_{G}$. 
In particular, all of the real poles $u$ satisfy 
\[
\frac{1}{k^{2}-2k+2} \leq u\leq\frac{1}{2}
\]
and, if $0\in Spec({\bf A}_{G})$, $u=-1/(k-2)$; 
all of the imaginary poles $u$ lie on the circle whose center is 
$-1/(k^{2}-2k)$ and radius is $(k-1)/(k^{2}-2k)$. 
\end{exa}

Of course, we have already known in Theorem~\ref{RC} and Corollary~\ref{order}
$u=1/(k^{2}-2k+2)$ is a pole whose order is 2 or 1 
if $G$ is bipartite or not, respectively; 
 $u=1/2$ is a pole and 
its order is $(k-2)n+2$ or $(k-2)n+1$ if $G$ is bipartite or not, 
respectively. 
We should remark, for this modified zeta function $\tilde{{\bf Z}}_{G}(u)$, 
all poles except {\em trivial poles} lie on the circle above 
if $G$ is a {\em Ramanujan graph}. 
In this sense, we can say $\tilde{{\bf Z}}_{G}(u)$ also has 
a property of the analogue of the {\em Riemann hypothesis}.
%

\begin{figure}[h]
  \begin{center}
   \includegraphics[width=140mm]{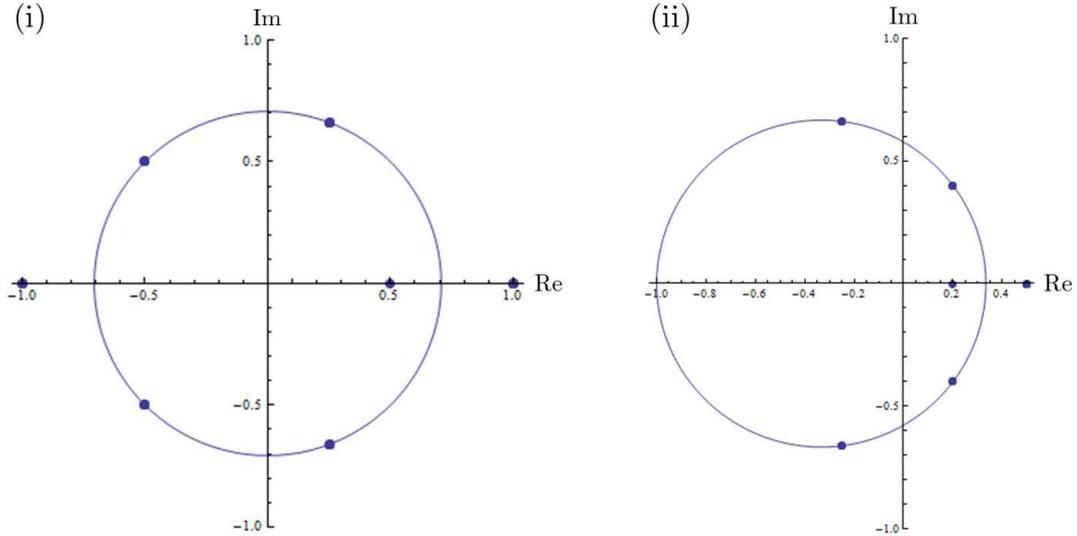}
  \end{center}
  \caption{
  {\small Poles of ${\bf Z}_G(u)$ and $\tilde{\bf Z}_G(u)$. 
  The dots in Figs. (i) and (ii) are the poles of 
  ${\bf Z}_G(u)$ and $\tilde{\bf Z}_G(u)$ of the Petersen graph, respectively. 
  The circles in Figs. (i) and (ii) are $p^2+q^2=1/(k-1)$ and 
  $\left(p+1/(k^2-2k)\right)^{2}+q^{2} =\left((k-1)/(k^{2}-2k)\right)^2$ for $k=3$, respectively. 
  Since the Petersen graph is a Ramanujan graph, all poles except trivial poles lie on the circles. }}
  \label{fig:one}
\end{figure}
\par
\
\par\noindent
\noindent
{\bf Acknowledgments.}
\par
\noindent YuH's work was supported in part 
by JSPS Grant-in-Aid for Scientific Research (C)~25400208 
and (B)~24340031. 
NK and IS also acknowledge financial supports of 
the Grant-in-Aid for Scientific Research (C) 
from Japan Society for the Promotion of Science 
(Grant No.~24540116 and No.~23540176, respectively).
ES thanks to the financial support of
the Grant-in-Aid for Young Scientists (B) of Japan Society for the
Promotion of Science (Grant No. 25800088).
\par

\begin{small}
\bibliographystyle{jplain}

\end{small}

\end{document}